\newtheorem{definition}{Definition}
\newtheorem{proposition}{Proposition}
\newtheorem{theorem}{Theorem}
\newtheorem{example}{Example}
\newtheorem{rem}{Remark}
\newtheorem{assumption}{Assumption}
\newtheorem*{theorem*}{Theorem}
\newtheorem*{proposition*}{Proposition}
\newtheorem*{definition*}{Definition}
\newtheorem*{example*}{Example}
\begin{document}

%

%

\twocolumn[

\aistatstitle{Sliding-Window Signatures for Time Series: Application to Electricity Demand Forecasting}

\aistatsauthor{Nina Drobac \And Margaux Brégère \And Joseph de Vilmarest \And Olivier Wintenberger }

\aistatsaddress{ Sorbonne Université \And  Sorbonne Université \\ EDF R\&D \And Viking Conseil  \And Sorbonne Université} ]

\begin{abstract}
  
 Nonlinear and delayed effects of covariates often render time series forecasting challenging. To this end, we propose a novel forecasting framework based on ridge regression with signature features calculated on sliding windows. These features capture complex temporal dynamics without relying on learned or hand-crafted representations. Focusing on the discrete-time setting, we establish theoretical guarantees, namely universality of approximation and stationarity of signatures. We introduce an efficient sequential algorithm for computing signatures on sliding windows. The method is evaluated on both synthetic and real electricity demand data. Results show that signature features effectively encode temporal and nonlinear dependencies, yielding accurate forecasts competitive with those based on expert knowledge.

\end{abstract}

\section{INTRODUCTION}

Time series forecasting is essential across many domains where anticipated scenarios guide critical decisions, including energy management, healthcare systems, and financial markets. The task is inherently challenging: series often exhibit complex temporal dynamics, requiring models capable of capturing them.

This paper focuses in particular on forecasting electricity demand, a key challenge for maintaining the balance between supply and demand on the electric grid. As energy production is set according to predicted consumption, reliable forecasts are crucial in keeping the grid operating around the clock. 
Many state-of-the-art approaches rely on Generalised Additive Models (GAMs), which are a flexible way of representing demand as a sum of smooth functions, applied to pre-processed calendar and weather variables. We refer to \cite{antoniadis2024statistical} for an in-depth review of GAMs and other models for electricity forecasting. However, the selection of smooth functions and covariate pre-processing requires specialist knowledge. For example, due to the thermal inertia of buildings, changes in temperature affect electricity demand with a delay, which can be accounted for by exponential smoothing~\citep{goude2013local}. Furthermore, the relationship between temperature and demand is nonlinear and can be modelled using splines (see, among others,~\citealp{NEDELLEC2014375}.
We propose a new generic framework to automatically extract these complex dependencies of the target to exogenous variables, regardless of any prior assumptions about the nature of relationships or the impact of the past.

The signature transform, introduced by \cite{Chen_1958} and further developed in the context of rough path theory by \cite{Lyons_2007}, provides a representation of sequential data through iterated integrals referred to as signatures. 
These objects have been shown to capture key geometric and analytic properties of multi-dimensional paths (\citealp{friz2010multidimensional}).
Casting data as continuous paths and computing their signatures has proven to be an effective way of obtaining non-parametric feature sets for time-ordered data in machine learning tasks~\citep{chevyrev2025primersignaturemethodmachine,fermanian:tel-03507274}. Recently, a growing line of work has highlighted the relevance of signatures in modern AI pipelines, ranging from their connection to recurrent neural networks~\citep{fermanian2021framingrnnkernelmethod} to their integration with transformers for time-series tasks~\citep{morenopino2025roughtransformerslightweightcontinuous}.



Building on these ideas, we aim to leverage signatures for discrete-time time series forecasting. To this end, we propose a regression model on sliding-window signatures. Sliding windows are key in our approach as they enable the signatures to focus on the most recent past, discarding distant history as opposed to using exponentially fading memory as in \cite{jaber2025exponentiallyfadingmemorysignature}.  While~\cite{cohen2023nowcasting} already used sliding windows for nowcasting in finance, we propose a novel increment-based model that is both theoretically grounded and straightforward to implement.

The goal of this paper is to present this new forecasting framework that consists of three simple steps: data augmentation, sliding-window signature calculation and ridge regression on the increments of the target time series. We support it by the following main contributions. \newline
\textbf{Theoretical foundations.} We establish universal approximation results for sliding-window signatures in discrete time and prove that these features preserve stationarity when derived from a stationary series, providing a solid theoretical basis for our method. \newline
\textbf{Efficient algorithm.} We develop a sequential algorithm that exploits sliding window overlap for fast and efficient signature calculation. \newline
\textbf{Empirical validation.} Through experiments on synthetic and real-world data, we demonstrate that our method successfully leverages signatures as non-parametric features to capture complex temporal dynamics and outperform baselines with incorporated expert knowledge.

\section{\MakeUppercase{Signatures for Time Series Forecasting}}
\label{sec:signatures}

In this section, we present signatures, along with necessary theoretical background, frame them as features and place them in a discrete-time forecasting setting. For a more rigorous mathematical treatment, we refer to the Appendix \ref{appendix:theory}.

\subsection{Preliminaries}

We define a continuous path in $\mathbb{R}^d$ as any continuous mapping $x : [s, t] \longrightarrow \mathbb{R}^d$. We assume $d\geq2$.

\begin{definition}
    Let $x:[s,t] \longrightarrow \mathbb{R}^d$ be a continuous path.
    The 1-variation of $x$ is defined by
    \begin{equation*}
        \|x\|_{\mathrm{1}-\mathrm{var}}=\sup _{\left(t_0, \ldots, t_k\right) \in P} \sum_{i=1}^k\left\|x_{t_i}-x_{t_{i-1}}\right\|,
    \end{equation*}
    where $P=\{(t_0, \ldots, t_k) \mid k \geq 0,$ $s=t_0<\cdots<t_k=t\}$
    denotes the set of all finite partitions of $[s, t]$.
\end{definition}

Intuitively, $\|x\|_{\mathrm{1}-\mathrm{var}}$ can be seen as the length of the path $x$. 
From now on, we consider only paths of finite length, i.e. $\|x\|_{\mathrm{1}-\mathrm{var}} < \infty$.
We denote by $\otimes$ the tensor product and by $(\mathbb{R}^d)^{\otimes k}$ the $k$-th tensor power of $\mathbb{R}^d$ , with $(\mathbb{R}^d)^{\otimes 0} := \mathbb{R}$. 
We note that $(\mathbb{R}^d)^{\otimes k}$ is a Hilbert space of dimension $d^k$.

\begin{definition}
We denote by $\mathscr{T}$ the space of square-summable sequences of tensors of increasing order:
{\small
\begin{equation*}
    \mathscr{T}(\mathbb{R}^d) =
    \left\{ (a_k)_{k\ge0} \;\middle|\;
    a_k \in (\mathbb{R}^d)^{\otimes k},\;
    \sum_{k=0}^\infty \|a_k\|^2_{(\mathbb{R}^d)^{\otimes k}} < \infty
    \right\}.
\end{equation*}
}
\end{definition}

We endow $\mathscr{T}(\mathbb{R}^d)$ with the scalar product
$
\langle \mathbf{a}, \mathbf{b}\rangle_{\mathscr{T}(\mathbb{R}^d)} = \sum_{k=0}^{\infty} \langle a_k, b_k\rangle_{(\mathbb{R}^d)^{\otimes k}} \,,
$
which induces the norm $ \|\mathbf{a}\|_{\mathscr{T}\left(\mathbb{R}^d\right)} =  \sqrt{\sum_{k=0}^{\infty}\left\|a_k\right\|_{\left(\mathbb{R}^d\right)^{\otimes k}}^2} $.



\begin{proposition}
     $\left(\mathscr{T}\left(\mathbb{R}^d\right),\langle\cdot, \cdot\rangle_{\mathscr{T}\left(\mathbb{R}^d\right)}\right)$ is a separable Hilbert space. \label{prop:sig_Hilbert}
\end{proposition}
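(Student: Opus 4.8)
The plan is to recognise $\mathscr{T}(\mathbb{R}^d)$ as the Hilbertian (that is, $\ell^2$) direct sum of the countably many finite-dimensional Hilbert spaces $(\mathbb{R}^d)^{\otimes k}$, and then verify the three requirements in turn: that $\langle\cdot,\cdot\rangle_{\mathscr{T}(\mathbb{R}^d)}$ is a well-defined inner product, that the induced norm is complete, and that the space is separable.

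First I would check that the series defining the inner product converges and that $\mathscr{T}(\mathbb{R}^d)$ is a vector space. For $\mathbf{a}=(a_k)_k$ and $\mathbf{b}=(b_k)_k$ in $\mathscr{T}(\mathbb{R}^d)$, the Cauchy--Schwarz inequality in each $(\mathbb{R}^d)^{\otimes k}$ gives $|\langle a_k,b_k\rangle| \le \tfrac12(\|a_k\|^2+\|b_k\|^2)$, so $\sum_k\langle a_k,b_k\rangle$ converges absolutely; the same bound applied to $\mathbf{a}+\mathbf{b}$ shows closure under addition, and closure under scalar multiplication is immediate. Bilinearity, symmetry and positive-definiteness are inherited coordinatewise from the inner products on the $(\mathbb{R}^d)^{\otimes k}$, so $\langle\cdot,\cdot\rangle_{\mathscr{T}(\mathbb{R}^d)}$ is genuinely an inner product inducing the stated norm.

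The main work is completeness. Given a Cauchy sequence $(\mathbf{a}^{(n)})_n$, the coordinate bound $\|a_k^{(n)}-a_k^{(m)}\|_{(\mathbb{R}^d)^{\otimes k}} \le \|\mathbf{a}^{(n)}-\mathbf{a}^{(m)}\|_{\mathscr{T}(\mathbb{R}^d)}$ shows each coordinate sequence is Cauchy in the finite-dimensional (hence complete) space $(\mathbb{R}^d)^{\otimes k}$, so it converges to some $a_k$; set $\mathbf{a}=(a_k)_k$. To see $\mathbf{a}\in\mathscr{T}(\mathbb{R}^d)$ and $\mathbf{a}^{(n)}\to\mathbf{a}$, I would fix $\varepsilon>0$ and $N$ with $\|\mathbf{a}^{(n)}-\mathbf{a}^{(m)}\|_{\mathscr{T}(\mathbb{R}^d)}<\varepsilon$ for $n,m\ge N$, then for each truncation level $K$ pass to the limit $m\to\infty$ in the finite sum $\sum_{k=0}^K\|a_k^{(n)}-a_k^{(m)}\|^2\le\varepsilon^2$ to obtain $\sum_{k=0}^K\|a_k^{(n)}-a_k\|^2\le\varepsilon^2$, and finally let $K\to\infty$. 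This yields $\|\mathbf{a}^{(n)}-\mathbf{a}\|_{\mathscr{T}(\mathbb{R}^d)}\le\varepsilon$, which simultaneously shows $\mathbf{a}^{(n)}-\mathbf{a}\in\mathscr{T}(\mathbb{R}^d)$ (so $\mathbf{a}\in\mathscr{T}(\mathbb{R}^d)$ as a difference of elements of the space) and gives the convergence. The delicate point here, and the only place requiring any care, is the interchange of the two limits, which the truncate-then-pass-to-the-limit order handles cleanly.

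For separability, I would fix a countable dense subset $D_k\subset(\mathbb{R}^d)^{\otimes k}$ for each $k$ (possible since each is finite-dimensional, e.g.\ rational combinations of a fixed basis) and take $D$ to be the set of sequences that are eventually zero and whose nonzero entries lie in the corresponding $D_k$. Then $D$ is a countable union over the cutoff level of finite products of countable sets, hence countable. Density follows by the standard tail-then-head argument: given $\mathbf{a}$ and $\varepsilon>0$, choose $K$ with $\sum_{k>K}\|a_k\|^2<\varepsilon^2/2$ and then approximate each of the finitely many $a_k$, $k\le K$, by an element of $D_k$ so that the head contributes at most $\varepsilon^2/2$. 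This produces an element of $D$ within $\varepsilon$ of $\mathbf{a}$, completing the proof.
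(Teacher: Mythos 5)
Your proof is correct, but it follows a different route from the paper's. For completeness, the paper does not argue directly: it delegates the verification that $\left(\mathscr{T}\left(\mathbb{R}^d\right),\langle\cdot,\cdot\rangle_{\mathscr{T}\left(\mathbb{R}^d\right)}\right)$ is a Hilbert space to an external reference (Appendix A of the Fermanian et al.\ RNN--kernel paper), whereas you carry out the full $\ell^2$-direct-sum argument yourself --- coordinatewise Cauchy, truncate, pass to the limit in $m$, then let the truncation level grow --- and you also take care of the preliminary point, which the paper leaves implicit, that the defining series for the inner product converges absolutely and that the space is closed under addition. For separability, the paper embeds the orthonormal bases $e_{i_1}\otimes\cdots\otimes e_{i_k}$ of each level into $\mathscr{T}\left(\mathbb{R}^d\right)$, checks that their union is a countable orthonormal Schauder basis, and invokes the equivalence ``separable $\iff$ countable orthonormal basis''; you instead exhibit a countable dense set directly (finitely supported sequences with entries in countable dense subsets of each $(\mathbb{R}^d)^{\otimes k}$) via the tail-then-head estimate. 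The two separability arguments are close cousins --- rational combinations of the paper's basis vectors would produce exactly your dense set --- but yours is more elementary in that it does not rely on the characterization of separability through orthonormal bases, while the paper's has the side benefit of producing an explicit orthonormal basis of $\mathscr{T}\left(\mathbb{R}^d\right)$. Your version is fully self-contained; the paper's is shorter at the cost of an external citation for the hardest part.
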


\subsection{Signatures}

\begin{definition}
    The signature of a finite-length path $x$ on $[s, t]$ is defined as an infinite tensor sequence:
    \begin{equation*}
        S\left( x_{[s,t]} \right)=\left(1, \ S^1\left( x_{[s,t]} \right), \ldots, \ S^k\left( x_{[s,t]} \right), \ \ldots \right),
    \end{equation*}
    where the $k$-th element (called level) is given by
    \begin{equation*}
        {S^k \left( x_{[s,t]} \right)}=\idotsint \displaylimits_{s < u_1<\cdots<u_k < t}  \mathrm{d} x_{u_1} \otimes \cdots \otimes \mathrm{d} x_{u_k} \in {\left(\mathbb{R}^d\right)}^{\otimes k}.
    \end{equation*}
\end{definition}

Each tensor $S^k(x)$ can be written in terms of its elements, referred to as signature coefficients, indexed by multi-indices $(i_1,\ldots,i_k) \in \{1,\ldots,d\}^k$:  
\begin{equation*}
    S^{\left(i_1, \ldots, i_k\right)} = \idotsint \displaylimits_{s < u_1<\cdots<u_k < t} \mathrm{d}x^{\left(i_1\right)}_{u_1} \cdots \mathrm{d}x^{\left(i_k\right)}_{u_k}.
\end{equation*}


In practice, instead of dealing with infinite sequences, we only consider the levels up to (truncation) order $N$, defining the truncated signature as
$$
    S^{\leq N}\left( x_{[s,t]} \right) = \left(1, \ S^1\left( x_{[s,t]} \right), \ \ldots, \ S^N\left( x_{[s,t]} \right) \right).
$$
 Furthermore, we often ignore the tensor structure and view $S^{\leq N}$ as a collection of all signature coefficients with multi-index of length $k \leq N$, arranged in a vector of size $s_d(N)= \sum_{k=0}^{N} d^k = \left(d^{N+1}-1\right) /(d-1)$. We also note that, when the interval is not of importance, we use the abbreviation $S(x)$. 


While the definition may look complex, we now illustrate that in the simple case of linear paths, there is no need for evaluating or numerically approximating the iterated integrals. Instead, we obtain a closed formula for signature coefficients which we could interpret as monomials of path increments.

\begin{example}[Signature of a linear path] \label{ex-lin-sig}
    Let $x:[s,t] \mapsto \mathbb{R}^d$, $u \to \frac{x_{t} - x_s}{t-s}(u-s) + x_{s}$ be a d-dimensional linear path. It follows by integration that $S^{\left(i_1, \ldots, i_k\right)}\left(x_{[s,t]}\right) = \frac{1}{k!} \prod_{j=1}^k\left(x_t^{(i_j)}-x_s^{(i_j)}\right) $ and $S^k\left( x_{[s,t]} \right)=\frac{1}{k!}\left(x_t-x_s\right)^{\otimes k}$. (Detailed calculation in Example \ref{ex:piecewise_2}.)
\end{example}

We now turn to key signature properties leveraged in our framework. It follows from their definition as iterated integrals that signatures are invariant to both translation and time reparametrization (see Proposition \ref{prop:invariances}). Consequently, they do not encode the path’s starting point nor the speed with which it was traversed. A standard approach to reintroduce the latter information is $\textit{time augmentation}$, i.e. adding time as a path component.

The following proposition and example will prove crucial for efficient signature calculation.

\begin{proposition}[Algebraic properties]
\label{prop:algebraic_properties}
Let $x:[s, t] \mapsto \mathbb{R}^d$ and $y:[t, u] \mapsto \mathbb{R}^d$ denote two paths of finite length.
\begin{enumerate}
    
    \item (\textbf{Chen's identity}) Let $x * y:[s, u] \mapsto \mathbb{R}^d$ be the concatenation of $x$ and $y$, meaning $(x * y)_v=x_v$ for $v \in[s, t]$ and $(x * y)_v=x_t+y_v-y_t$ for $v \in[t, u]$. Then
    $
    S(\left(x * y\right)_{[s,u]})=S\left( x_{[s,t]} \right) \otimes S(y_{[t,u]}) \label{chen}.
    $
    
    \item (\textbf{Time reversal}) We denote the time-reversal of $x$ as the path $\overleftarrow{x}:[s, t] \mapsto \mathbb{R}^d$ where $\overleftarrow{x}(u)=x_{s+t-u}$. Then
    $
    S\left( x_{[s,t]} \right) \otimes S\left( \overleftarrow{x}_{[s,t]} \right) = (1,0,0,\ldots).
    $
\end{enumerate}
 
\end{proposition}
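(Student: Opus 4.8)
The plan is to establish both algebraic identities from the definition of the signature as iterated integrals, reducing each to a combinatorial statement about how integration domains decompose.

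\textbf{Chen's identity.}
First I would fix a level $k$ and expand $S^k\bigl((x*y)_{[s,u]}\bigr)$ as an iterated integral over the simplex $\{s<v_1<\cdots<v_k<u\}$. Because the concatenated path $x*y$ is constant-in-shape on each piece—equal to $x$ on $[s,t]$ and a translate of $y$ on $[t,u]$—its increments satisfy $\mathrm{d}(x*y)_v=\mathrm{d}x_v$ for $v\in[s,t]$ and $\mathrm{d}(x*y)_v=\mathrm{d}y_v$ for $v\in[t,u]$ (the translation constant $x_t-y_t$ has zero derivative, which is exactly where the translation-invariance from Proposition~\ref{prop:invariances} enters). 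The key step is then to partition the ordered simplex according to how many of the integration variables $v_1<\cdots<v_k$ fall before the splitting time $t$: for each $j\in\{0,\ldots,k\}$, the variables $v_1,\ldots,v_j$ range over $\{s<v_1<\cdots<v_j<t\}$ and $v_{j+1},\ldots,v_k$ range over $\{t<v_{j+1}<\cdots<v_k<u\}$. Since the integrand factorizes across these two blocks, the integral over the $k$-simplex splits as a sum of products
\begin{equation*}
    S^k\bigl((x*y)_{[s,u]}\bigr)=\sum_{j=0}^{k} S^{j}\bigl(x_{[s,t]}\bigr)\otimes S^{\,k-j}\bigl(y_{[t,u]}\bigr),
\end{equation*}
which is precisely the degree-$k$ component of the tensor product $S(x_{[s,t]})\otimes S(y_{[t,u]})$ in $\mathscr{T}(\mathbb{R}^d)$. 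Assembling over all $k$ gives the claimed identity.

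\textbf{Time reversal.}
For the second part I would first observe that reversing time, $\overleftarrow{x}(v)=x_{s+t-v}$, is the concatenation-inverse of $x$: running the path forward and then backward returns to the starting point, so $x*\overleftarrow{x}$ is (up to translation) a loop traced out and retraced. The cleanest route is to combine this with Chen's identity just proved: one shows that $S\bigl((x*\overleftarrow{x})_{[s,s+2(t-s)]}\bigr)=(1,0,0,\ldots)$, the unit of the tensor algebra, because the signature of a path immediately retraced cancels level by level. By Chen's identity the left-hand side equals $S(x_{[s,t]})\otimes S(\overleftarrow{x}_{[s,t]})$, giving the result. To prove the cancellation I would use the substitution $v\mapsto s+t-v$ in the iterated integrals defining $S^k(\overleftarrow{x})$, which reverses the order of integration variables and introduces an alternating structure; combined with the shuffle relations this forces all levels beyond the zeroth to vanish.

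\textbf{Main obstacle.}
The routine parts are the domain decomposition and the change of variables; the delicate step is justifying that the infinite tensor product and the level-by-level summation are valid in $\mathscr{T}(\mathbb{R}^d)$, i.e. that the rearrangements converge in the Hilbert-space norm of Proposition~\ref{prop:sig_Hilbert}. For finite-length paths the factorial decay $\|S^k\|\le \|x\|_{1\text{-var}}^k/k!$ guarantees absolute convergence, so I expect the main work to be stating the tensor-algebra product carefully enough that the finite-$k$ identity passes to the limit without circular appeals to convergence. I would therefore prove the identities first at each truncation level $N$, where everything is a finite sum, and only then take $N\to\infty$.
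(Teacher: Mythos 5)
A point of reference first: the paper never proves Proposition~\ref{prop:algebraic_properties}. It is quoted as a classical result, and the appendix only restates it, records the coefficient-wise form of Chen's identity in~\eqref{chencomp}, and interprets time reversal as giving the inverse in the group $T_1\left(\left(\mathbb{R}^d\right)\right)$. So your attempt is measured against the standard literature proofs rather than against anything in the paper. Your argument for Chen's identity is correct and is the standard one: decompose the ordered simplex according to how many integration variables fall in $[s,t]$, use $\mathrm{d}(x*y)_v=\mathrm{d}x_v$ on $[s,t]$ and $\mathrm{d}(x*y)_v=\mathrm{d}y_v$ on $[t,u]$, and identify $\sum_{j}S^{j}(x_{[s,t]})\otimes S^{k-j}(y_{[t,u]})$ with the degree-$k$ component of the tensor product. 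Your closing worry about convergence in $\mathscr{T}(\mathbb{R}^d)$ is, however, beside the point: the product in the extended tensor algebra is graded and each graded component is a finite sum, so the identity is purely level-by-level and no limit is ever taken; the factorial decay is only needed to check that signatures lie in $\mathscr{T}(\mathbb{R}^d)$ at all.

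The genuine gap is in time reversal. The substitution $v\mapsto s+t-v$ does give $S^{(i_1,\ldots,i_k)}\left(\overleftarrow{x}_{[s,t]}\right)=(-1)^{k}S^{(i_k,\ldots,i_1)}\left(x_{[s,t]}\right)$, and combining with Chen's identity reduces the claim to
\begin{equation*}
\sum_{j=0}^{k}(-1)^{k-j}\,S^{(i_1,\ldots,i_j)}\left(x_{[s,t]}\right)\,S^{(i_k,\ldots,i_{j+1})}\left(x_{[s,t]}\right)=0,\qquad k\ge 1.
\end{equation*}
Your phrase ``combined with the shuffle relations this forces all levels beyond the zeroth to vanish'' is precisely the step that still needs a proof: after rewriting each product via the shuffle identity, what must be shown is that the alternating sum over $j$ of the shuffles of $(i_1,\ldots,i_j)$ with the reversal of $(i_{j+1},\ldots,i_k)$ vanishes identically as a linear combination of words --- i.e.\ that signed word reversal is the antipode of the shuffle Hopf algebra. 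That is a standalone combinatorial lemma (provable by induction on $k$), not an automatic consequence of the shuffle relations, and as written you assert the cancellation rather than derive it. In the piecewise-linear setting the paper actually works in, there is a much shorter route you could substitute: by Chen's identity it suffices to treat a single linear segment $\ell$ with increment $\Delta$, for which Example~\ref{ex-lin-sig} gives $S(\ell)=\sum_{k}\Delta^{\otimes k}/k!$ and $S(\overleftarrow{\ell})=\sum_{k}(-\Delta)^{\otimes k}/k!$; since $\Delta$ commutes with $-\Delta$, these two exponentials multiply to $(1,0,0,\ldots)$, and the general piecewise-linear case follows by cancelling segments from the middle of $x*\overleftarrow{x}$ outward. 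For arbitrary bounded-variation paths one then needs a continuity-of-the-signature-under-approximation argument, which your sketch would also have to supply.
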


Chen's identity allows us to extend the result from Example \ref{ex-lin-sig} to piecewise linear paths.

\begin{example}
\label{ex-piecewise}
    Let $x:[s,t] \rightarrow \mathbb{R}^d$ be a piecewise linear path and let $s=u_0<u_1<\cdots<u_k=t$ be a partition such that $x$ is linear on each $\left[ u_{j-1}, u_j \right]$. From Chen's identity we have:
    $$
    S\left( x_{[s,t]} \right)=S\left( x_{[s,u_1]} \right)\otimes \cdots \otimes S\left( x_{[u_{k-1},t]} \right),
    $$
    where every $S\left( x_{[u_{j-1},u_j]} \right)$ is given in Example \ref{ex-lin-sig}.
\end{example}

The next result (\citealp[Lemma 5.1]{lyons2014rough}) is key for switching from infinite to finite-dimensional objects.

\begin{proposition}
 Let $x:[s,t] \longrightarrow \mathbb{R}^d$ be a path of finite length. It holds that:
\begin{equation*}
    \begin{aligned}
{\|{S^k \left( x_{[s,t]} \right)}\|}_{ {\left( \mathbb{R}^d \right)}^{\otimes k}} 
 & \leq \frac{\|x\|_{1 \text{-} var}^k}{k!}, \\
 \|S\left( x_{[s,t]} \right)\|_{\mathscr{T}\left(\mathbb{R}^d\right)} & \leq \exp \left(\|x\|_{1 \text{-} var}\right)<\infty.
\end{aligned}
\end{equation*}
\end{proposition}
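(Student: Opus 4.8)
The plan is to establish the per-level estimate first and then obtain the global bound by summation. Writing $L := \|x\|_{1\text{-}\mathrm{var}}$, the goal of the first part is to show $\|S^k(x_{[s,t]})\|_{(\mathbb{R}^d)^{\otimes k}} \le L^k/k!$, after which the second inequality will follow from an $\ell^2 \le \ell^1$ comparison together with the assumption $L<\infty$.

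First I would bound the level tensor by the integral of the integrand's norm, using the triangle inequality for the (Bochner) integral valued in the Hilbert space $(\mathbb{R}^d)^{\otimes k}$:
\begin{equation*}
\|S^k(x_{[s,t]})\|_{(\mathbb{R}^d)^{\otimes k}} \le \idotsint_{s<u_1<\cdots<u_k<t} \|\mathrm{d}x_{u_1}\otimes\cdots\otimes\mathrm{d}x_{u_k}\|_{(\mathbb{R}^d)^{\otimes k}}.
\end{equation*}
Since the Hilbert tensor norm is multiplicative on elementary tensors, $\|a_1\otimes\cdots\otimes a_k\| = \prod_{i}\|a_i\|$, the integrand equals $\prod_{i=1}^k\|\mathrm{d}x_{u_i}\|$; that is, we are left integrating the $k$-fold product of the total-variation measure over the ordered simplex $\{s<u_1<\cdots<u_k<t\}$.

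The main work is then to evaluate this scalar iterated integral, and this is the step I expect to require the most care. I would introduce the variation function $V(u) := \|x\|_{1\text{-}\mathrm{var};[s,u]}$, a nondecreasing function with $V(s)=0$ and $V(t)=L$ whose associated measure $\mathrm{d}V$ is exactly $\|\mathrm{d}x\|$. Setting $\psi_k(u) := \idotsint_{s<u_1<\cdots<u_k<u}\mathrm{d}V(u_1)\cdots\mathrm{d}V(u_k)$, a one-line recursion gives $\psi_k(u) = \int_s^u \psi_{k-1}(r)\,\mathrm{d}V(r)$ with $\psi_0\equiv1$, and an induction yields $\psi_k(u)\le V(u)^k/k!$, hence $\psi_k(t)\le L^k/k!$. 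The inductive step relies on $\int_s^u V(r)^{k-1}\,\mathrm{d}V(r) = V(u)^k/k$, which holds with equality precisely because the variation function of a \emph{continuous} path of finite length is itself continuous, so $\mathrm{d}V$ has no atoms; if one does not want to invoke this, the same substitution still yields the inequality, which suffices. Combining with the previous display gives $\|S^k(x_{[s,t]})\|\le L^k/k!$.

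Finally, the global bound follows with no further geometry. Using $\sum_k a_k^2 \le (\sum_k a_k)^2$ for nonnegative $a_k$,
\begin{equation*}
\|S(x_{[s,t]})\|_{\mathscr{T}(\mathbb{R}^d)}^2 = \sum_{k=0}^\infty \|S^k(x_{[s,t]})\|^2 \le \Big(\sum_{k=0}^\infty \tfrac{L^k}{k!}\Big)^2 = e^{2L},
\end{equation*}
so that $\|S(x_{[s,t]})\|_{\mathscr{T}(\mathbb{R}^d)}\le e^{L} = \exp(\|x\|_{1\text{-}\mathrm{var}}) < \infty$. An alternative to this direct estimate would be to approximate $x$ in $1$-variation by piecewise linear paths, apply the explicit segment formula from Example \ref{ex-lin-sig} with Chen's identity (Example \ref{ex-piecewise}), and pass to the limit; but the direct simplex computation is cleaner and avoids invoking continuity of the signature map.
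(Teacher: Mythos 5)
Your proof is correct. Note that the paper does not prove this proposition at all -- it is quoted from the literature (Lemma 5.1 of the cited Lyons lecture notes) -- so there is no in-paper argument to compare against; your derivation is essentially the standard one from that reference: bound each level by the $k$-fold integral of the total-variation measure over the ordered simplex, evaluate that simplex integral as $\|x\|_{1\text{-}\mathrm{var}}^k/k!$ via the variation function (your remark that continuity of $x$ makes $V$ atom-free, and that only the inequality is needed anyway, correctly handles the one delicate point), and then pass from the levelwise bound to the $\mathscr{T}(\mathbb{R}^d)$-norm by the $\ell^2\le\ell^1$ comparison.
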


The first equation can be seen as a bound on the information carried by each signature level: the factorial decay of the $k$-th level norm means that information decreases as the level increases. Consequently, for sufficiently large $N$,  $S^{\leq N}(x)$ provides a good approximation of $S(x)$, justifying truncation in practice. The second equation shows that signatures lie in the separable Hilbert space $\mathscr{T}(\mathbb{R}^d)$, a property we will exploit when working with probability distributions.
 
Finally, we present the main motivation for using truncated signatures as features in machine learning tasks.

\begin{theorem}[Universal approximation] \label{thm:uat}
    Let $K$ be a compact subset of the space of finite-length paths from $[s,t]$ to $\mathbb{R}^d$ and such that for any $x \in K,$ $x_s=a$ for some $a \in \mathbb{R}^d$ and $x$ has at least one monotone coordinate. Let $f: K \rightarrow \mathbb{R}$ be continuous. Then, for every $\varepsilon>0$, there exists $N \ge 1, \theta \in \mathbb{R}^{s_d(N)}$, such that, for any $x \in K$,
    \begin{equation*}
        \left\lvert f(x)- \theta^{\top} {S^{\leq N}\left(x_{[s,t]}\right)}\right\rvert \leq \varepsilon.
    \end{equation*}
\end{theorem}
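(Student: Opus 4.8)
The plan is to deduce the statement from the Stone--Weierstrass theorem, with the two hypotheses on $K$ serving precisely to supply the separation-of-points requirement. First I would introduce the candidate function class
\[
\mathcal{A} = \left\{ x \mapsto \theta^{\top} S^{\leq N}\!\left(x_{[s,t]}\right) \;:\; N \geq 0,\ \theta \in \mathbb{R}^{s_d(N)} \right\},
\]
that is, the set of all finite linear combinations of signature coefficients regarded as real-valued functions on $K$. The theorem is then equivalent to the assertion that $\mathcal{A}$ is uniformly dense in $C(K)$: once density is established, any continuous $f$ is approximated to within $\varepsilon$ by some $\theta^{\top} S^{\leq N}$, which is exactly the claim.

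Second, I would verify the hypotheses of Stone--Weierstrass. Since $K$ is compact, it suffices to check that $\mathcal{A}$ is a subalgebra of $C(K)$ containing the constants and separating points. The inclusion $\mathcal{A} \subseteq C(K)$ follows from continuity of each signature coefficient as a function of the path (the signature map is continuous in the $1$-variation topology that makes $K$ compact). The constants lie in $\mathcal{A}$ because the zeroth level is always identically $1$, and closure under sums and scalar multiples is immediate. The one nontrivial algebraic point is closure under products, where I would invoke the \emph{shuffle identity}: a product of two signature coefficients $S^{I}(x)\, S^{J}(x)$ equals a finite linear combination $\sum_{K} c_K\, S^{K}(x)$ of signature coefficients indexed by the shuffles $K$ of the multi-indices $I$ and $J$ (each of length $|I|+|J|$). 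Hence a product of two elements of $\mathcal{A}$ is again of the form $\theta^{\top} S^{\leq N'}$ for a larger truncation order $N'$, so $\mathcal{A}$ is an algebra.

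Third --- and this is where the hypotheses on $K$ enter, and where I expect the main difficulty --- I would establish that $\mathcal{A}$ separates points, i.e. that distinct paths in $K$ have distinct signatures. This fails for arbitrary paths: signatures are invariant under translation and reparametrization and annihilate tree-like excursions, so injectivity can break down. The two assumptions repair exactly these defects. Fixing the initial value $x_s = a$ removes the translation ambiguity (signatures ignore the starting point, which must therefore be pinned externally), while the existence of a monotone coordinate, playing a time-like role, rules out nontrivial tree-like behaviour. Formally I would appeal to the signature uniqueness theorem (Hambly--Lyons, and its bounded-variation refinements): the signature determines a finite-length path up to tree-like equivalence, and a path with a monotone coordinate is tree-reduced, so that $S(x_{[s,t]}) = S(y_{[s,t]})$ together with $x_s = y_s = a$ forces $x = y$ in $K$. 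Therefore, if $x \neq y$, some signature coefficient differs and a suitable linear functional $\theta^{\top} S^{\leq N}$ separates them.

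Finally, with all hypotheses checked, Stone--Weierstrass yields density of $\mathcal{A}$ in $C(K)$ for the uniform norm, producing the desired $N$ and $\theta$. The genuinely hard step is point separation: it rests on the deep uniqueness theorem, and care is needed to confirm that the monotone-coordinate assumption does collapse "tree-like equivalence" to honest equality on $K$, and that no two distinct elements of $K$ are reparametrizations sharing a signature. The remaining verifications --- continuity, the shuffle-algebra structure, and the presence of constants --- are routine.
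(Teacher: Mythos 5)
Your proposal is correct and follows exactly the route the paper itself indicates: the paper gives no detailed argument for Theorem~\ref{thm:uat} beyond the remark that it ``follows directly from the Stone--Weierstrass theorem,'' and your elaboration (shuffle identity for the algebra structure, the level-zero coefficient for the constants, and signature uniqueness under the fixed-basepoint and monotone-coordinate hypotheses for point separation) is the standard way of filling in that one-line proof. Your closing caveat about reparametrizations within $K$ is well placed --- it is precisely why the paper's application uses time augmentation, which pins the parametrization --- but it does not change the approach.
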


This result follows directly from the Stone-Weierstrass theorem, showing that signature coefficients serve as the path analogue of monomials. Therefore, they can be leveraged to linearize the problem of learning a more complex (non-linear) function on a compact set of paths.

So far, our treatment of paths and signatures has been deterministic. Turning to the statistical perspective, from now on, we consider random paths
$X : [s, t] \to \mathbb{R}^d$. Indeed, for each $\omega$ in the sample space $\Omega$, the realisation  $X(\omega)$ is a path for which we can define the signature $S\left( x(\omega) \right)$ as before. In other words, $S(X)$ is a random variable taking values in a separable Hilbert space (Proposition \ref{prop:sig_Hilbert}).

\subsection{Signature Features for Time Series}
\label{sec:framework}

We now present the pipeline for using signatures as a feature extraction method for discrete-time time series forecasting. 

Let $\left( Y_t \right)_t$ denote the target time series and $\left( X_t \right)_t$ denote the covariate time series, with $Y_t \in \mathbb{R}$ and $X_t \in \mathbb{R}^d$.
We consider predictions for $Y_t$ of the form $f(X_{u \le t})$ for some continuous, measurable, possibly non-linear function $f$. 
In practice, when forecasting electricity demand, covariates from the distant past, such as temperature from a year ago, have little predictive value. 
Hence, we assume short-term dependence: the forecast of $Y_t$ depends only on a finite number of the most recent observations of $\left( X_t \right)_t$.

To apply the previously established results, discrete data must be first embedded into continuous time to yield a path. The choice of embedding can significantly impact the performance of signature-based methods~\citep{FERMANIAN_embedding}. We turn to linear interpolation, meaning that the path on the interval $\left[t, t+1\right]$ is defined by
$
u \longmapsto X_{t}+{(u-t)}\left(X_{t+1}-X_{t}\right).
$
By concatenating finitely many of these segments, we obtain piecewise linear paths. We note that these are finite-length paths and therefore fall within the theoretical framework established earlier.
Moreover, as illustrated in Example \ref{ex-piecewise}, they allow for a straightforward iterative computation, consisting in calculating the signatures on each linear segment as in Example \ref{ex-lin-sig} and concatenating them inductively through the tensor product (see Remark \ref{rem:piecewiselin_calc}). 
This procedure is efficiently implemented in the Python package \texttt{iisignature} (\citealp{iisignature}), used in this work.

In classical frameworks like ~\cite{fermanian2022functional}, signatures summarize the entire path. In time series, this corresponds to expanding windows, where all past points contribute equally. This is unsuitable for our application, where only recent observations matter. A more natural choice is a sliding window of size $w$, considering only data from $t-w$ to $t$. As shown in Section \ref{sec:theory}, sliding windows also help preserve stationarity, though they introduce $w$ as a hyperparameter, which we discuss in Section \ref{sec:numerical_experiments}.


Finally, we apply time augmentation within each sliding window: $\left(0, X_{t-w} \right), \left(1, X_{t-w+1} \right), \ldots, \left( w, X_t \right)$ at time $t$. This ensures that the resulting paths have at least one monotone coordinate. 

\begin{rem}
    \label{rem:time_augmentatiom}
  This approach yields the same signature values as if the time coordinate was added over the entire time horizon:  $\left(t-w, X_{t-w} \right), \left(t-w+1, X_{t-w+1} \right),\ldots, \left( t, X_t \right)$. 
  Consequently, when the window slides, only the new observation needs to be added, without adjusting previous timestamps.
\end{rem}

\begin{figure*}
\centering
\begin{tikzpicture}[>=stealth, thick, scale=0.38]
    \definecolor{Green1}{HTML}{2D9253}
    \definecolor{Red1}{HTML}{B70E00}

    \node at (-15,0) {\begin{tabular}{c} Covariates \\ $X_{t-w}, \ldots, X_t$ \end{tabular}};
    \node at (-1.2,0) {\begin{tabular}{c} Path \\ $X_{[t-w,t]}$ \end{tabular}};
    \node at (10,0) { $S^{\leq N}\left( X_{[t-w,t]} \right)$ };
    \node at (23.8,0) { $\hat{Y}_{t}$ };

    \draw[->, line width=1pt, black]
        (-11.5,0) -- (-3,0)
        node[midway, above, black, font=\footnotesize\itshape] {Time augmentation}
        node[midway, below, black, font=\footnotesize\itshape] {Linear interpolation};

    \draw[->, line width=1pt, black]
        (0.6,0) -- (6.5,0)
        node[midway, above, black, font=\footnotesize\itshape] {Signature}
        node[midway, below, black, font=\footnotesize\itshape] {calculation};

    \draw[->, line width=1pt, Green1]
        (13.4,0) -- (22.8,0)
        node[midway, above, Green1, font=\footnotesize\itshape] {Fit linear function};

    \draw[->, line width=1pt, Red1]
        (-13,-1.2) to[out=-8,in=-172] (22.8,-1.2);

    \node[text=Red1, font=\footnotesize\itshape]
        at (4,-3.4) {Fit non-linear function};
\end{tikzpicture}
\caption{Method Pipeline: leveraging sliding-window signature features to linearize the forecasting task.} \label{figure:pipeline} 
\end{figure*}
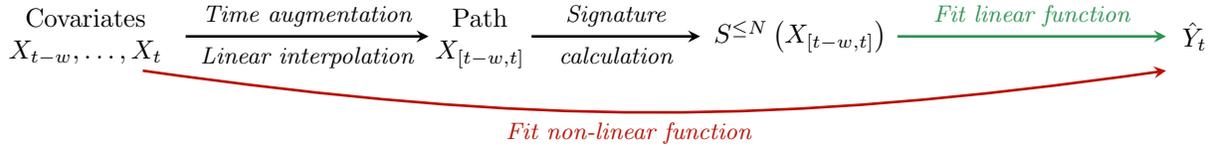

Our method consists in the following steps.
We first fix a sliding window size $w$. At each time $t$, we consider $\left(0, X_{t-w} \right), \left(1, X_{t-w+1} \right), \ldots, \left( w, X_t \right)$. We linearly interpolate these points to obtain a piecewise linear path, denoted by $X_{[t-w,t]}$. Having constructed a continuous path, we can calculate its signature $S\left( X_{[t-w,t]} \right)$ and leverage the theory presented in Section \ref{sec:signatures}. In terms of application, we can now use the truncated signature $S^{\leq N} \left( X_{[t-w,t]} \right)$ as a feature set for $\left(X_t\right)_t$ on $[t-w,t]$. 
We summarize this as a pipeline in Figure \ref{figure:pipeline}. The final step of fitting a linear model is motivated by the universal approximation theorem, which has yet to be established for our sliding-window, discrete-time setting. In the following section, we identify conditions under which this result holds and examine the important property of stationarity within our framework.

\section{\MakeUppercase{Theoretical contributions}}\label{sec:theory}

We introduce two assumptions on the  discrete-time time series of covariates ${(X_t)}_t$.

\begin{assumption}[Uniform boundedness]\label{assumption:bounded}
    Suppose that $|X^{(i)}_t| \leq M$ \textit{a.s.} for some $M > 0$ and all $1 \leq i \leq d$.
\end{assumption}

\begin{assumption}[Stationarity of increments]\label{assumption:stationary}
    Suppose that the increments $(X_{t+1}-X_{t})_{t \ge 1}$ are strictly stationary.
\end{assumption}


Assuming the time series is uniformly bounded is not restrictive, as real-world data such as temperature lie within physical limits. Similarly, stationarity of increments is a milder assumption than stationarity of the full series; for example, while temperature exhibits seasonality, its half-hour increments can reasonably be considered stationary (see Appendix \ref{appendix:experiments}).

    
    


\subsection{Universal Approximation Theorem}



We treat the paths on sliding windows as functions ${X}_{[t-w, t]}(\omega) : [0,w] \mapsto \mathbb{R}^{d+1}$, which share a common time domain, allowing us to consider the set of such paths and functions over this set.




\begin{theorem}[Universal approximation on sliding windows]
    Let ${(X_t)}_t$ be a $d$-dimensional time series such that Assumption \ref{assumption:bounded} holds. Let $w \in \mathbb{N}$ be a fixed window size and $f : C\left([0,w],\mathbb{R}^{d+1}\right) \mapsto \mathbb{R}$ a continuous function given the uniform topology. 
    It holds that for every $\varepsilon>0$, there exists $N \in \mathbb{N}, \theta \in \mathbb{R}^{s_{d+1}(N)}$, such that
    \begin{equation*}
        \sup_{t}\left\lvert f\left({X}_{[t-w, t]} - (0,X_{t-w})\right)- \theta^{\top} S^{\leq N} \left({X}_{[t-w, t]}\right) \right\rvert \leq \varepsilon.
    \end{equation*}
    \label{thm:uat-signatures}
\end{theorem}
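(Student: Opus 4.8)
The plan is to reduce this statement to the deterministic universal approximation theorem (Theorem \ref{thm:uat}): I would exhibit a single compact set $K$ of paths that contains every realized sliding-window path, apply that theorem on $K$, and then reconcile the two signature expressions using translation invariance.

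First I set up the family of paths. After time augmentation and translation by the starting point, each window $t$ yields the piecewise linear path $\widetilde{X}^{(t)} := X_{[t-w,t]} - (0,X_{t-w}) \in C([0,w],\mathbb{R}^{d+1})$ whose vertices at times $j=0,1,\dots,w$ are $v_j = (j,\, X_{t-w+j}-X_{t-w})$. Every such path starts at the common point $a = 0 \in \mathbb{R}^{d+1}$, and its first (time) coordinate is the identity $u\mapsto u$, which is strictly monotone; hence the hypotheses of Theorem \ref{thm:uat} on a shared starting point and a monotone coordinate hold automatically.

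The key step is compactness. By Assumption \ref{assumption:bounded} the spatial increments satisfy $\|X_{t-w+j}-X_{t-w}\|_\infty \le 2M$ almost surely, so every vertex tuple $(v_0,\dots,v_w)$ lies in the compact box $B \subset (\mathbb{R}^{d+1})^{w+1}$ determined by $v_0 = 0$ and $v_j \in \{j\}\times[-2M,2M]^d$ for $1\le j\le w$. I would then observe that the linear interpolation map $\iota$, sending a vertex tuple to its piecewise linear interpolant, is Lipschitz (hence continuous) from the Euclidean topology on $(\mathbb{R}^{d+1})^{w+1}$ to the uniform topology on $C([0,w],\mathbb{R}^{d+1})$. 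Consequently $K := \iota(B)$ is compact, being the continuous image of a compact set, and it contains every realized path $\widetilde{X}^{(t)}$ almost surely; equivalently, one may invoke Arzel\`a--Ascoli, since the family is uniformly bounded and uniformly Lipschitz, hence equicontinuous. This is the main technical obstacle. I would also note that, because all paths in $K$ share the fixed time grid $0,1,\dots,w$, their signature coefficients depend polynomially on the finitely many vertices (Example \ref{ex-lin-sig} together with Chen's identity), so uniform convergence on $K$ coincides with convergence of vertices and the signature map is continuous there, as required to apply Theorem \ref{thm:uat}.

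Finally, applying Theorem \ref{thm:uat} to the continuous restriction $f|_K$ with target dimension $d+1$ gives, for every $\varepsilon>0$, some $N\in\mathbb{N}$ and $\theta\in\mathbb{R}^{s_{d+1}(N)}$ with $|f(x)-\theta^\top S^{\leq N}(x)| \le \varepsilon$ for all $x\in K$. Specialising to $x = \widetilde{X}^{(t)}$ and using translation invariance of signatures (Proposition \ref{prop:invariances}), which yields $S^{\leq N}(\widetilde{X}^{(t)}) = S^{\leq N}(X_{[t-w,t]})$, I obtain the bound for each $t$. Since the error is at most $\varepsilon$ uniformly over all of $K$, taking the supremum over $t$ preserves it and establishes the claim (almost surely, as Assumption \ref{assumption:bounded} holds almost surely).
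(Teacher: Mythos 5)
Your proof is correct and follows the same overall strategy as the paper's: embed all realized translated window paths into a single compact subset of $C([0,w],\mathbb{R}^{d+1})$, apply Theorem \ref{thm:uat} there, and transfer the bound back via translation invariance of the signature. Where you differ is in how compactness is established. The paper defines its compact set abstractly as the collection of all paths satisfying four conditions (a linear monotone coordinate with slope bounded below, common start at $0$, uniform bound $B=\max(2M,w)$, common Lipschitz constant $L=\max(4M,1)$) and invokes Arzelà--Ascoli, which additionally requires checking that this set is closed in the uniform topology. You instead exploit the fact that every path in play is piecewise linear on the fixed grid $0,1,\dots,w$, hence determined by its $w+1$ vertices: by Assumption \ref{assumption:bounded} the vertex tuples range over a compact box in $(\mathbb{R}^{d+1})^{w+1}$, and the interpolation map is $1$-Lipschitz into the uniform topology, so the image is compact as a continuous image of a compact set. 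This is more elementary (no Arzelà--Ascoli, no closedness verification) and yields a smaller compact set consisting only of piecewise-linear paths, which is all the theorem needs since only realized paths must be covered. Two minor points: your remark about continuity of the signature map on $K$ is not a hypothesis of Theorem \ref{thm:uat} (it is internal to that theorem's Stone--Weierstrass proof) and can be dropped; and, as in the paper, the supremum over $t$ should be understood as holding almost surely, since Assumption \ref{assumption:bounded} is only an almost-sure bound.
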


\textit{Sketch of proof.} We first introduce paths on windows translated by the starting point, ${X}_{[t-w, t]} - \left( 0, X_{t-w} \right)$. We consider the set of all realisations of these random paths on all sliding windows. These paths have a monotone coordinate (time augmentation), they all begin at the same point (translation), are uniformly bounded (Assumption \ref{assumption:bounded}) and they all are Lipschitz with the same constant (Assumption \ref{assumption:bounded} and piecewise linearity). By constructing a superset of paths with the same properties that is compact in the uniform topology, we can apply Theorem \ref{thm:uat} to it, completing the proof. For more detail, see Appendix \ref{appendix:theory_uat}.

\begin{rem}
    \label{rem:subtraction}
  Assuming all paths start at the same point is vital, since signatures cannot encode the path's starting value due to translation invariance. This is commonly addressed in practice through basepoint augmentation, prepending a zero to each path (\citealp{morrill2021generalisedsignaturemethodmultivariate}). As it was not suitable for our sliding-window framework, we instead subtract the initial value from each window. The truncated signature remains unchanged under this transformation, $S^{\le N}\bigl(X_{[t-w,t]} - (0, X_{t-w})\bigr) = S^{\le N}(X_{[t-w,t]})$, again due to translation invariance. This is key for efficient sequential computation, as there is no need to modify the path at each step. The subtraction has mainly a theoretical implication: we consider the function $f$ as acting on translated paths, though in practice the starting point still carries predictive information. We address this explicitly in Section~\ref{sec:implementation}.
\end{rem}

The main takeaway from this theorem is that we can effectively linearize the task of learning the mapping $f$ by approximating it with a linear functional on the truncated signature. The result allows for an approximation of a wide range of complicated functions $f$ (such as GAMs or conditional expectation), without requiring assumptions on their specific form, but given that they are continuous with respect to the input.

\subsection{Stationarity of Signatures}


Computing the signature of $(X_t)_t$ over a sliding window produces a new signature series ${\left( S(X_{[t-w,t]}) \right)}_t$. Since these signatures form the features of our forecasting model, we study their stationarity to better characterize the resulting prediction process $(\widehat{Y}_t)_t$.

\begin{rem} \label{rem-stat_def}
    Although signatures are defined as infinite sequences, the usual definition of strict stationarity (via equality in distribution) still applies, since they lie in the separable Hilbert space $\mathscr{T}(\mathbb{R}^d)$, where finite-dimensional marginal distributions determine the law of the process~\citep[see, e.g. Chapter 1 of][]{Bosq2000LinearPI}.

\end{rem}

\begin{theorem}
Let ${(X_t)}_{t}$ be a discrete $d$-dimensional time series such that Assumption \ref{assumption:stationary} holds. Then, for any truncation order $N \in \mathbb{N}$, the time series ${\left({S^{\leq N}\left(X_{[t-w,t]}\right)}\right)}_t$ is strictly stationary. Furthermore, the time series ${\left({S\left(X_{[t-w,t]}\right)}\right)}_t$ is strictly stationary.
\label{thm:stationarity_sig}
\end{theorem}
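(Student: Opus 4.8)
The plan is to reduce the claim to the assumed stationarity of the increments by exhibiting the window signature as a fixed, \emph{time-independent} measurable function of finitely many consecutive increments, and then invoking the elementary fact that such a sliding-window transformation sends strictly stationary sequences to strictly stationary sequences.

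\textbf{Signature as a function of increments.} First I would write $\Delta_s := X_{s+1}-X_s$ and decompose the time-augmented window path $X_{[t-w,t]}$ into its $w$ linear segments. Chen's identity (Example \ref{ex-piecewise}) then gives $S(X_{[t-w,t]}) = \bigotimes_{j=0}^{w-1} S\!\left(X_{[t-w+j,\,t-w+j+1]}\right)$, and by Example \ref{ex-lin-sig} each segment signature depends only on that segment's increment, which under time augmentation is $(1,\Delta_{t-w+j})$ — the constant first coordinate being the unit time increment. Hence there is a single map $\Phi$, the \emph{same} for every $t$, with $S(X_{[t-w,t]}) = \Phi(\Delta_{t-w},\dots,\Delta_{t-1})$; its coefficients are polynomials in the increments, so $\Phi$ is continuous, and its level-$N$ truncation is $\pi_N\circ\Phi$ for the coordinate projection $\pi_N$. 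Conceptually this is exactly translation invariance (Proposition \ref{prop:invariances}) together with Remark \ref{rem:time_augmentatiom}: only increments survive, which is precisely what makes windows at different times comparable.

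\textbf{Sliding-window transform of a stationary sequence.} Next, fixing $N$ and setting $W_t := (\pi_N\circ\Phi)(\Delta_{t-w},\dots,\Delta_{t-1})$, I would argue as follows. For any indices $r_1<\dots<r_m$ and any shift $h$, the vector $(W_{r_1},\dots,W_{r_m})$ is the image under one fixed measurable map $G$ of the finite increment family $(\Delta_s)_{s\in I}$ with $I=\bigcup_i\{r_i-w,\dots,r_i-1\}$, while $(W_{r_1+h},\dots,W_{r_m+h})$ is the image of $(\Delta_{s+h})_{s\in I}$ under the \emph{same} $G$. Assumption \ref{assumption:stationary} yields $(\Delta_s)_{s\in I}\stackrel{d}{=}(\Delta_{s+h})_{s\in I}$, and pushing forward through $G$ preserves equality in law; since the added time coordinate has constant increment $1$ and is trivially stationary, this establishes strict stationarity of $\bigl(S^{\le N}(X_{[t-w,t]})\bigr)_t$.

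\textbf{The infinite signature.} Finally, because $S(X_{[t-w,t]})$ takes values in the separable Hilbert space $\mathscr{T}(\mathbb{R}^{d+1})$ (Proposition \ref{prop:sig_Hilbert}), the law of the process is determined by its finite-dimensional marginals, i.e. by the joint laws of finitely many coordinate functionals (Remark \ref{rem-stat_def}). Each such functional factors through some $\pi_N$ and is therefore already covered by the previous step, so every finite-dimensional marginal is shift-invariant and the full series $\bigl(S(X_{[t-w,t]})\bigr)_t$ is strictly stationary. I expect the main obstacle to be exactly this infinite-dimensional bookkeeping — ensuring measurability of $\Phi$ into $\mathscr{T}(\mathbb{R}^{d+1})$ and the reduction of strict stationarity to coordinate marginals — rather than the increment representation or the sliding-window argument, which are routine; both delicate points are resolved by separability together with the coordinatewise polynomial description of $\Phi$ obtained in the first step.
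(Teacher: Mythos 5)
Your proposal is correct and follows essentially the same route as the paper's proof: express the window signature as a fixed, time-independent measurable (polynomial) function of the $w$ consecutive increments via Chen's identity and the linear-segment formula, push equality in law of the increment blocks through that fixed map to get strict stationarity of the truncated signatures, and pass to the full signature using separability of $\mathscr{T}$ and determination of the law by finite-dimensional marginals. Your second step is in fact slightly more explicit than the paper's, since you spell out the joint-distribution argument over arbitrary finite index sets and shifts, which the paper only sketches.
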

    
\textit{Sketch of proof.} The main argument is that signature coefficients, and consequently truncated signatures, are deterministic, measurable functions of the series' increments. Applying any measurable, deterministic function to two random vectors with the same distribution preserves their equality in distribution. Therefore, given Assumption \ref{assumption:stationary}, the stationarity of truncated signatures holds. The conclusion for signatures follows immediately from the previous result and Remark \ref{rem-stat_def}. For full proof, see Appendix \ref{appendix:theory_stat}.

\begin{rem}
    Let us now suppose that Assumptions \ref{assumption:bounded}, \ref{assumption:stationary} hold for our time series of covariates $(X_t)_t$. Combining the two theoretical results, we can now conclude that the prediction process $( \widehat{Y}_t)_t$, with $\widehat{Y_t} = \theta^{\top} {S^{\leq N}\left({X}_{[t-w, t]} \right)}$, is also strictly stationary as a linear transformation of a stationary process. This has important implications for choosing a model. \label{rem:stationarity}
\end{rem}

\section{\MakeUppercase{Implementation}}\label{sec:implementation}

\subsection{Model}
\label{subsec:framework}

The preceding theoretical considerations yield two key implications for the specification of our model. First, our forecast is a function of the translated paths $ X_{[t-w]} - \left(0,X_{t-w}\right) $. Second, in accordance with Remark \ref{rem:stationarity}, under the Assumption \ref{assumption:stationary}, our forecast is stationary. It is therefore natural to apply our model on a stationary target variable. These requirements are, however, restrictive and fail to hold in the context of electricity demand. To address this, we introduce an additional hyperparameter: a delay $D$ in the target variable. The delay $D$ is selected to be sufficiently large to ensure that at time $t$ the true value of $Y$ at time $t-D$ is observable, but small enough that the increment process $\Delta_D Y_t := Y_t - Y_{t-D}$ can be treated as stationary. 
Furthermore, it is far more realistic that the increments $\Delta_D Y_t$, rather than the raw values $Y_t$, depend on the translated paths (see Remark \ref{rem:subtraction}). 



We now present our pipeline applied to $\Delta_D Y_t$. The training procedure consists in fitting a ridge regression
\begin{align*}
\hat\theta \in \arg\min\limits_\theta \sum\limits_{t=1}^{t_{train}} \left(\Delta_D Y_t - {\theta}^{\top} S^{\leq N} \left( x_{[t-w,t]} \right)\right)^2 + \lambda \|\theta\|_2^2,
\end{align*}
where the regularization constant $\lambda > 0$ is chosen on the validation set. Our final forecast of the target for any $t> w$ is then $\widehat{Y}_t = y_{t-D} + \widehat{\Delta_D Y_t}$, with $\widehat{\Delta_D Y_t} = {\hat{\theta}}^{\top} S^{\leq N} \left( x_{[t-w,t]} \right)$. See Appendix \ref{appendix:implementation} for the full algorithm in pseudocode.


Several challenges arise when implementing this model. We begin by addressing the choice of hyperparameters: namely $D, w$ and $N$. The delay $D$ depends on the data and is chosen so that the assumptions of stationarity of $\left( \Delta_D Y_t \right)_t$ and the availability of $Y_{t-D}$ at time $t$ hold. When selecting the truncation order $N$, we highlight that the size of the truncated signature $s_d(N)$ grows polynomially with the dimension of the covariate series $d$, but exponentially with $N$. Consequently, large values of $N$ (above 10) are typically infeasible. In practice, $N$ is either chosen on a validation set or fixed to a moderate value.
The window size $w$ can likewise be selected on the validation set, noting that only $w \geq D$ are considered so that $\Delta_D Y_t$ may be forecasted as a function of $X_{[t-w]} - \left(0,X_{t-w}\right)$.

The use of a ridge regression is motivated by the ill-conditioning of the design matrix. High correlations among signature components, amplified by overlapping sliding windows, lead to unstable coefficient estimates. We note that  lasso and ridge regression are common when working with signature features ~\citep{guo2024consistencysignatureusinglasso, fermanian2022functional}; we adopt ridge, as it performs better in presence of collinearity. Additionally, we omit all signature coefficients that depend solely on time. The $k$-th such coefficient is given by a constant $S^{(1,\ldots,1)}(x_{[t-w,t]}) = \frac{w^k}{k!}$, contributing only redundant intercept terms and worsening the collinearity. We thus redefine $S^{\leq N}$ to exclude these coefficients.


Finally, the truncated signature’s rapidly growing size (in $N$ and $d$) and the tensor-multiplication involved in its computation make calculating signatures for large and numerous windows computationally expensive. To address this, we propose an algorithm that incrementally updates the signature as the window slides, rather than recomputing it from scratch.

\subsection{Efficient Signature Computation}
\label{subsec:algorithm}

The key observation is that in our pipeline, there is a large overlap in paths captured within two consecutive sliding windows. 
To exploit this, we use the algebraic properties of signatures from Proposition \ref{prop:algebraic_properties}. Recall that the time-reversal property allows us to remove the contribution of $X_{t-w}$ by computing the signature of the linear section from $\left(t-w+1, X_{t-w+1}\right)$ back to $\left(t-w, X_{t-w}\right)$ (reversing time) and concatenating it on the left of the path using Chen’s identity. Similarly, to incorporate the new data point $X_{t+1}$, it suffices to compute the signature of the linear section between $\left(t, X_t\right)$ and $\left(t+1, X_{t+1}\right)$ and concatenate it on the right, again via Chen’s identity. These sequential updates are possible because the path remains unchanged as the window slides, which is still consistent with the theoretical insights, as highlighted in Remarks~\ref{rem:time_augmentatiom} and~\ref{rem:subtraction}.

With the proposed procedure, presented in Algorithm \ref{algo:update_sig} and implemented in Python, the number of tensor products required to compute the signature on a new window drops from $w$ to just 2. This ensures that, even with a very fine time grid and a large window size, the procedure remains computationally feasible.

\begin{algorithm}
    \SetKwInOut{Input}{Input}
    \SetKwInOut{Output}{Output}
    \Input{observations $\left\{ x_t \mid t \geq 0 \right\}$, window size $w$, truncation order $N$}
    \Output{signatures $S^{\leq N}\!\left(x_{[t-w,t]}\right)$ for all $t \geq w$}

    Compute initial signature $S \gets S^{\leq N}\!\left(x_{[0,w]}\right)$ \;

    \For{$t > w$}{
        
        Compute signature of the oldest segment $S_\text{old} \gets S^{\leq N}\!\left(x_{[t-w-1,t-w]}\right)$ \;
        Update $S$ by overwriting the oldest segment: $S \gets S_\text{old} \otimes S$ \tcp*{time-reversal update}

        Compute signature of the new segment $S_\text{new} \gets S^{\leq N}\!\left(x_{[t-1,t]}\right)$ \;
        Update $S$ by adding the new segment $S \gets S \otimes S_\text{new}$ \tcp*{Chen’s identity}
        
    }
    \caption{Signature update on sliding windows}
    \label{algo:update_sig}
\end{algorithm}

\section{\MakeUppercase{Numerical experiments}}\label{sec:numerical_experiments}

This section demonstrates the proof of concept of our sliding-window signature approach to time series forecasting. Our focus is on forecasting electricity demand in France, which exhibits a strong nonlinear dependence on recent past values \citep[see, e.g. Section 1.2.4 of][]{antoniadis2024statistical}.
This time series also depends heavily on its own past values, so it can be useful to consider its available observations as features (see, among other \citealp{Vilmarest2022state}).
Our experiments on both synthetic and real data examine the effectiveness of signatures in capturing such dependencies. Both code and data are made open source, see Appendix \ref{appendix:experiments}.

\subsection{Dataset}

We gather half-hourly electrical demand data from Eco2mix\footnote{https://www.rte-france.com/eco2mix} open-source dataset published by RTE, France's electricity transmission system operator.
We combine it with temperature observation data from Météo France \footnote{https://donneespubliques.meteofrance.fr}. 
To obtain a time series of temperatures at the national level, we average the observations across French weather stations.
We then linearly interpolate the data to get a unified data set comprising 48 observations per day, covering the period from January 1, 2012, to December 30, 2015. 
This stable period was chosen in order to avoid the significant fluctuations in electricity demand brought about by the COVID-19 pandemic and the  2022 energy crisis.
In what follows, we denote by $Y_t$ the electricity demand and by $T_t$ the temperature at any time $t=1, 2, \dots$.
We apply our method for a single feature: $T_t$, and underline that in practice, it is replaced with temperature forecasts. Furthermore, for any smoothing parameter $\alpha \in [0,1]$, we can define exponentially smoothed temperature as
\begin{equation*}
    \left\{
    \begin{array}{ll}
        \overline{T}_1^\alpha = T_1 \\
      \overline{T}_t^\alpha = (1-\alpha)\overline{T}_{t-1}^\alpha + \alpha T_t^\alpha \,, \,\textrm{ for any } t \geq 2  \,. 
    \end{array}
\right.
\end{equation*}
As mentioned in the introduction, smoothed temperature values are often used to model thermal inertia and were shown to enhance the quality of forecasts.
These variables stem from expert knowledge and will be useful for generating synthetic data and comparing the signature method to relevant baselines.

\subsection{Experimental Setup}  

In what follows, we represent the data as a two-dimensional path consisting of temperature augmented with rescaled time, i.e., $(t/w, T_t)$, where $w$ denotes the window size. Time is rescaled so that its increments, and therefore the scale of signature coefficients, remain comparable across different choices of $w$. The selection of window size is data-driven: since half-hourly demand strongly depends on the time of day, we preserve this structure by considering window sizes in units of full days. 
For the delay, we fix $D=2 \text{ days}$, reflecting the availability of observed demand.
We use the years 2012 and 2013 to train the model, the year 2014 as a validation set on which we fix the hyperparameters (namely the ridge regularization constant $\lambda$) and the year 2015 as a test set. 

Given a window size $w$ and a signature truncation order $N$, we predict the target time series using the approach described in Subsection~\ref{subsec:framework}, which we denote by $\mathrm{RidgeSig}$. We evaluate our method and benchmark models, namely linear regressions performed on various features $X^1,X^2,\dots$ and denoted $\mathrm{LR}(X^1,X^2,\dots)$, using the two following standard error metrics:
Root Mean Squared Error (RMSE), and Mean Absolute Percentage Error (MAPE) on the test set.
We recall that for a target values $Y_1,\dots Y_T$ and the associated forecasts $\hat{Y}_1,\dots \hat{Y}_T$, we have:
$$
\mathrm{RMSE} = \frac{1}{T}\sum_{t=1}^T \big( \hat{Y}_t-Y_t \big)^2, \mathrm{MAPE} = \frac{100}{T}\sum_{t=1}^T\frac{|\hat{Y}_t-Y_t|}{|Y_t|}.
$$

\subsection{Proof of Concept on Synthetic Data}
\label{subsec:expe:syntheticdata}

\paragraph{Data generation.} 
For a smoothing parameter $\alpha$ and any time step $t$, we generate the electricity demand time series $\big(\widetilde{Y}^\alpha_t\big)_t$ as:
$$ \widetilde{Y}^\alpha_t = \theta_1 \overline{T}_t^\alpha +  \theta_2 \big(\overline{T}_t^\alpha\big)^2 + \varepsilon_t, \quad \varepsilon_t \overset{\mathrm{i.i.d}}{\sim} \mathcal{N}(0,\sigma^2)\,. $$
Therefore, it depends non-linearly on recent and present temperatures, with $\alpha$ controlling the impact of past temperature values - the smaller $\alpha$, the longer the memory.
We refer to Appendix D
for details on fitting the model parameters ($\theta_1$, $\theta_2$ and $\sigma$) and synthetic data fidelity with real demand data.

\paragraph{Capturing temporal and non-linear dependencies.}
Fixing the smoothing parameter $\alpha = 0.005$ and the signature truncation order $N=4$, the optimal window obtained on the test dataset is $w=9$.
Results in Table~\ref{table:experiment1} show that our approach outperforms all the benchmarks, except $\mathrm{RL}(\overline{T}_t^\alpha, (\overline{T}_t^\alpha)^2)$, which constitutes the best possible result because this model knows exactly what the variables are and how they relate to the target used for data generation.
 These results suggest that our signature approach captures both temporal dependencies, as it outperforms $\mathrm{RL}(\overline{T}_t^\alpha)$, and nonlinear effects, as it outperforms $\mathrm{RL}(T_t, T_t^2)$.
\begin{table}[ht]
\centering
\resizebox{0.48\textwidth}{!}{%
\begin{tabular}{lccccc}
\toprule
\textbf{Model} & $\mathrm{LR}(T)$ & $\mathrm{LR}(T, T^2)$ & $\mathrm{LR}(\overline{T}^{\alpha})$ & RidgeSig & $\mathrm{LR}\big(\overline{T}_t^\alpha, (\overline{T}_t^\alpha)^2\big)$  \\
\midrule
\midrule
\textbf{RMSE (MW)} & 5\,435 & 4\,729 & 3\,079 & \cellcolor{lightgray} 1\,637 &  \textbf{995} \\
\textbf{MAPE (\%)} & 8.3 & 6.4 & 4.9 &  \cellcolor{lightgray} 2.5 &  \textbf{1.5}  \\ 
\bottomrule
\end{tabular}%
}
\caption{RMSE (MW) and MAPE (\%) on test data set for synthetic data generated with $\alpha=0.005$.}
\label{table:experiment1}
\end{table}

\paragraph{Optimal sliding window size.}
We now investigate how the optimal window size is affected by the strength of past dependencies in the data.
Figure~\ref{fig:windowsize} shows the RMSE of our approach with truncation order fixed at $N=5$, plotted for window lengths ranging from 2 to 32 days, applied to synthetic data generated by two different smoothing parameters $\alpha = 0.005$ and $\alpha = 0.05$.
This confirms the intuition that with $\alpha$ decreasing, the optimal window size increases.
Capturing this additional structure requires a larger window so that the signature representation has access to a broader history of the covariates.
 \begin{figure}[h!]
        \centering
        \includegraphics[width=0.98\linewidth]{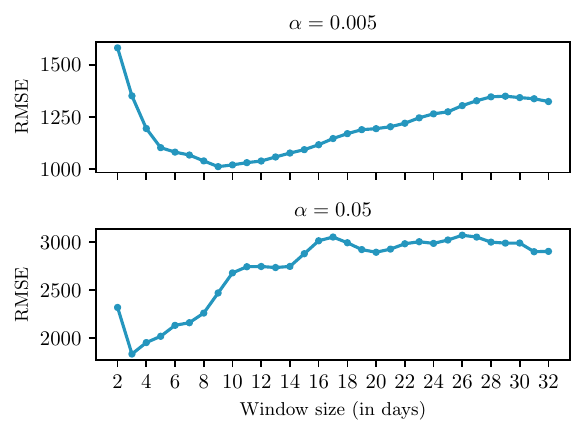}
        \caption{RMSE (MW) on the test set as a function of the sliding window size for synthetic data generated with $\alpha=0.005$ (top: optimal window = 9 days) and $\alpha =0.05$ (bottom: optimal window = 3 days).
        }
        \label{fig:windowsize}
\end{figure}
\begin{rem} 
 We can show by induction that $\overline{T}_t^\alpha = \sum_{s=0}^{t-1} \alpha (1-\alpha)^{s}T_{t-s} + (1- \alpha)^{t} T_1$, so the ratio between the proportion of $T_{t-s}$  and the one of $T_{t}$ in the smoothing equals $(1-\alpha)^s$. 
  After $3$ days, this ratio still equals $0.48$ for $\alpha = 0.005$, while it is already $6e^{-4}$ for $\alpha = 0.05$. 
  This suggests that the temperature from three days ago continues to have a significant impact on the data generated by the first model, whereas it is already being forgotten by the second model.
  Figure~\ref{fig:windowsize} perfectly illustrates this phenomenon. 
\end{rem}

\paragraph{Optimal signature truncation order.}
We investigate the impact of the signature truncation order by running our algorithm for $N=2$ to $N=7$ and window sizes from $2$ to $32$ days.
As shown in Figure \ref{fig:experiment2}, an order that is too small makes it difficult to detect temporal and non-linear dependencies, which degrades performance
Therefore, $N$ must be large enough.
On the other hand, we  highlight that the larger the value of N, the longer the computational time and the more unstable the results.
This suggests that the truncated order should be between 4 and 6.
 \begin{figure}
        \centering
        \includegraphics[width=0.98\linewidth]{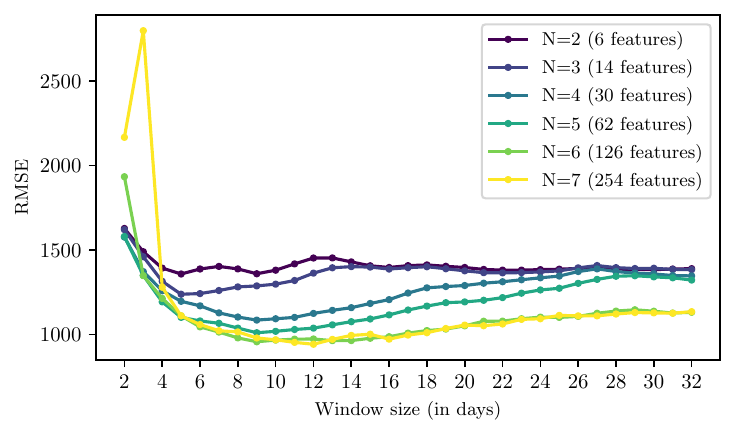}
        \caption{RMSE (MW) on the test set for synthetic data ($\alpha$=0.005) as a function of the sliding window size and for different truncation orders.}
        \label{fig:experiment2}
\end{figure}

\subsection{Electricity Demand Forecasting}
\label{subsec:expe:realdata}
Finally, we apply our procedure to real electricity demand data.
Since real demand, unlike the synthetic data, exhibits a pronounced weekly cycle, we account for it by changing the forecasting delay to $D = 7$ days. We add to the previous comparison benchmarks using lagged demand from one week prior $\left(Y_{t-7\text{ days}}\right)$.
We emphasize that we tested various smoothing parameters to tune the benchmarks: $\alpha = 0.005$ gives the best performance.
As suggested by synthetic data, we set the window size to $w=9$ and the truncated order to $N=6$.
Results of Table~\ref{table:errors_real_data} demonstrates that our procedure yields the best results across both error metrics. 
Furthermore, in  Figure~\ref{fig:experiment3}, we compare demand predictions from our model and the best performing baseline from Table~\ref{table:errors_real_data}, focusing on a summer and winter week in 2015. Our model follows more closely observed values, particularly in winter when demand is more sensitive to temperature changes.

\begin{table}[ht]
\centering
\resizebox{0.48\textwidth}{!}{%
\begin{tabular}{lcc}
\toprule
\textbf{Model} & \textbf{RMSE (MW)} & \textbf{MAPE (\%)} \\
\midrule
$LR \left( \overline{T}, \overline{T}^2 \right)$ & 6\,518 & 10.7 \\
$LR\left( T, T^2, \overline{T}, \overline{T}^2 \right)$ & 6\,172 & 9.9 \\
$LR\left( Y_{t-\text{7 days}} \right)$ & 4\,149 & 5.3 \\
$LR\left( T, T^2, \overline{T}, \overline{T}^2, Y_{t-7\text{ days}} \right)$ & 3\,714 & 5.3 \\
RidgeSig &\cellcolor{lightgray} \textbf{3\,150} &\cellcolor{lightgray} \textbf{4.4} \\
\bottomrule
\end{tabular}%
}
\caption{Test RMSE and MAPE on real demand.}
\label{table:errors_real_data}
\end{table}

 \begin{figure}
        \centering
        \includegraphics[width=0.92\linewidth]{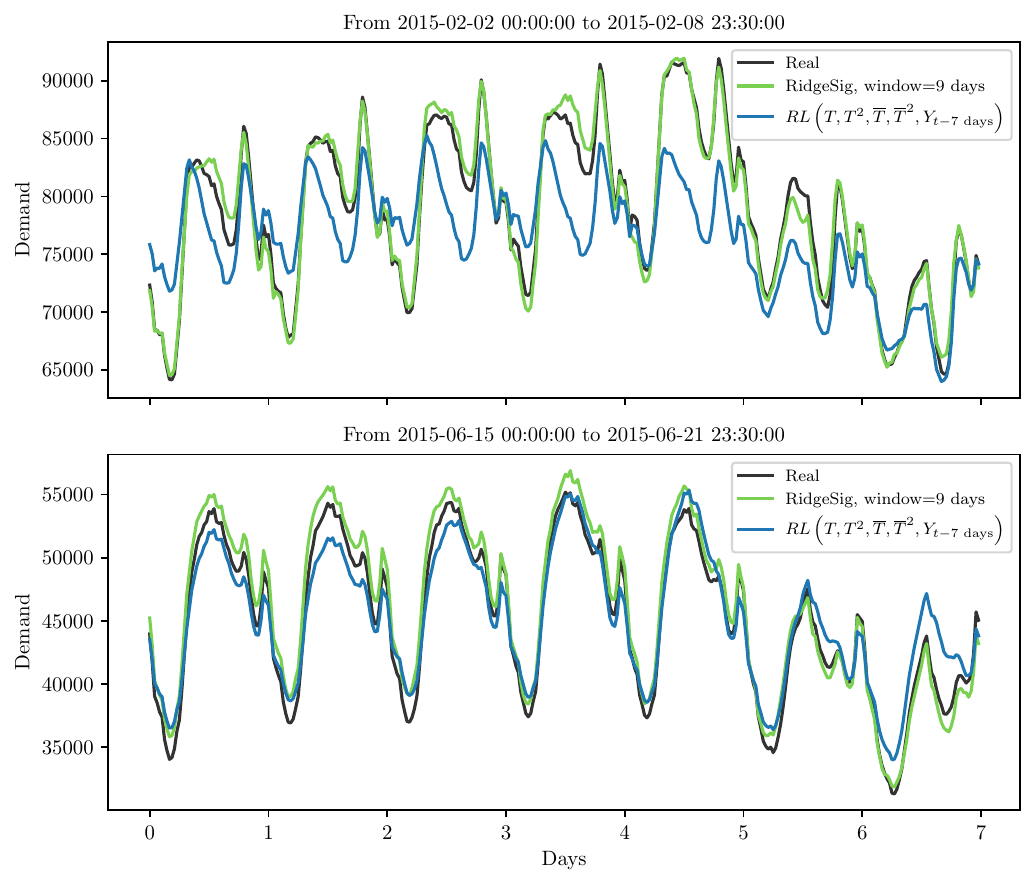}
        \caption{Observed and predicted demand for a winter (top) and summer (bottom) week in the test set.} 
        \label{fig:experiment3}
\end{figure}

\section{\MakeUppercase{Conclusion}}\label{sec:conclusion}

We established theoretical foundations in a new framework for time series forecasting that relies on sliding-window signatures, proving universality of approximation and stationarity of signatures in the discrete-time setting. 
We highlighted the importance of aligning paths at a common starting point and including appropriate timestamps as an additional coordinate. Incorporating these insights, we built a model that leverages signatures as non-parametric features for covariate time series. Our sequential algorithm exploits sliding window overlap for efficient signature computation.
We demonstrated, both on synthetic and real electricity demand data, that signature features capture nonlinear and temporal dependencies without learned representations or expert knowledge. Our approach does require sensible choices of hyperparameters, particularly window size and target delay, to match the data’s temporal structure and granularity. A limitation lies in the rapid growth of signature size with truncation order and path dimension, though we demonstrated good performance at modest orders (e.g., $N=6$).
Future directions include extending the framework to discrete-time signature kernels,
and adapting our sequential algorithm to an online learning setting.

\subsubsection*{Acknowledgements}

This research was supported by the European Union’s Horizon Europe research and innovation programme through the Marie Skłodowska-Curie Grant Agreement No.101081674.

The authors would like to thank Prof. Christa Cuchiero for valuable inputs during meetings in Vienna, and Adeline Fermanian for constructive discussions that helped improve this work.

\bibliography{references}

\begin{thebibliography}{}

\bibitem[Antoniadis et~al., 2024]{antoniadis2024statistical}
Antoniadis, A., Cugliari, J., Fasiolo, M., Goude, Y., and Poggi, J.-M. (2024).
\newblock {\em Statistical Learning Tools for Electricity Load Forecasting}.
\newblock Springer.

\bibitem[Bosq, 2000]{Bosq2000LinearPI}
Bosq, D. (2000).
\newblock {\em Linear Processes in Function Spaces: Theory and Applications},
  volume 149 of {\em Lecture Notes in Statistics}.
\newblock Springer.

\bibitem[Chen, 1958]{Chen_1958}
Chen, K.-T. (1958).
\newblock Integration of paths—a faithful representation of paths by
  non-commutative formal power series.
\newblock {\em Transactions of the American Mathematical Society},
  89:395–407.

\bibitem[Chevyrev and Kormilitzin,
  2025]{chevyrev2025primersignaturemethodmachine}
Chevyrev, I. and Kormilitzin, A. (2025).
\newblock A primer on the signature method in machine learning.

\bibitem[Cohen et~al., 2023]{cohen2023nowcasting}
Cohen, S.~N., Lui, S., Malpass, W., Mantoan, G., Nesheim, L., de~Paula, A.,
  Reeves, A., Scott, C., Small, E., and Yang, L. (2023).
\newblock Nowcasting with signature methods.
\newblock {\em arXiv preprint arXiv:2305.10256}.

\bibitem[De~Vilmarest and Goude, 2022]{Vilmarest2022state}
De~Vilmarest, J. and Goude, Y. (2022).
\newblock State-space models for online post-covid electricity load forecasting
  competition.
\newblock {\em IEEE Open Access Journal of Power and Energy}, 9:192--201.

\bibitem[Fermanian, 2021a]{FERMANIAN_embedding}
Fermanian, A. (2021a).
\newblock Embedding and learning with signatures.
\newblock {\em Computational Statistics \& Data Analysis}, 157:107148.

\bibitem[Fermanian, 2021b]{fermanian:tel-03507274}
Fermanian, A. (2021b).
\newblock {\em {Learning time-dependent data with the signature transform}}.
\newblock Theses, {Sorbonne Universit{\'e}}.

\bibitem[Fermanian, 2022]{fermanian2022functional}
Fermanian, A. (2022).
\newblock Functional linear regression with truncated signatures.
\newblock {\em Journal of Multivariate Analysis}, 192:105031.

\bibitem[Fermanian et~al., 2021]{fermanian2021framingrnnkernelmethod}
Fermanian, A., Marion, P., Vert, J.-P., and Biau, G. (2021).
\newblock Framing rnn as a kernel method: A neural ode approach.

\bibitem[Friz and Victoir, 2010]{friz2010multidimensional}
Friz, P.~K. and Victoir, N.~B. (2010).
\newblock {\em Multidimensional Stochastic Processes as Rough Paths: Theory and
  Applications}, volume 120 of {\em Cambridge Studies in Advanced Mathematics}.
\newblock Cambridge University Press, Cambridge.

\bibitem[Goude et~al., 2013]{goude2013local}
Goude, Y., Nedellec, R., and Kong, N. (2013).
\newblock Local short and middle term electricity load forecasting with
  semi-parametric additive models.
\newblock {\em IEEE transactions on smart grid}, 5(1):440--446.

\bibitem[Guo et~al., 2024]{guo2024consistencysignatureusinglasso}
Guo, X., Wang, B., Zhang, R., and Zhao, C. (2024).
\newblock On consistency of signature using lasso.

\bibitem[Jaber and Sotnikov, 2025]{jaber2025exponentiallyfadingmemorysignature}
Jaber, E.~A. and Sotnikov, D. (2025).
\newblock Exponentially fading memory signature.

\bibitem[Lyons, 2014]{lyons2014rough}
Lyons, T. (2014).
\newblock Rough paths, signatures and the modelling of functions on streams.

\bibitem[Lyons et~al., 2007]{Lyons_2007}
Lyons, T., Caruana, M., and Lévy, T. (2007).
\newblock {\em Differential Equations Driven by Rough Paths}, volume 1908 of
  {\em Lecture Notes in Mathematics}.
\newblock Springer.

\bibitem[Moreno-Pino et~al.,
  2025]{morenopino2025roughtransformerslightweightcontinuous}
Moreno-Pino, F., Álvaro Arroyo, Waldon, H., Dong, X., and Álvaro Cartea
  (2025).
\newblock Rough transformers: Lightweight and continuous time series modelling
  through signature patching.

\bibitem[Morrill et~al.,
  2021]{morrill2021generalisedsignaturemethodmultivariate}
Morrill, J., Fermanian, A., Kidger, P., and Lyons, T. (2021).
\newblock A generalised signature method for multivariate time series feature
  extraction.

\bibitem[Nedellec et~al., 2014]{NEDELLEC2014375}
Nedellec, R., Cugliari, J., and Goude, Y. (2014).
\newblock Gefcom2012: Electric load forecasting and backcasting with
  semi-parametric models.
\newblock {\em International Journal of Forecasting}, 30(2):375--381.

\bibitem[Reizenstein and Graham, 2020]{iisignature}
Reizenstein, J. and Graham, B. (2020).
\newblock Algorithm 1004: The iisignature library: Efficient calculation of
  iterated-integral signatures and log signatures.
\newblock {\em ACM Transactions on Mathematical Software (TOMS)}.

\end{thebibliography}

\newpage

\onecolumn

\appendix

\section{THEORETICAL BACKGROUND}
\label{appendix:theory}

In this section, we present a more rigorous treatment of results concerning paths and path integration, tensor spaces, and signatures. To ensure completeness and readability, some definitions and results from the main body are restated.

\subsection{Paths and path integration}

We define a \textit{continuous path} in $\mathbb{R}^d$ as any continuous mapping from some interval $[s,t]$ to $\mathbb{R}^d$, i.e. $x : [s, t] \longrightarrow \mathbb{R}^d$, $ u \longmapsto x_u = \big(x_u^{(1)}, x_u^{(2)}, \ldots, x_u^{(d)}\big)$. We consider multi-dimensional paths, meaning $d \geq 2$. Moreover, we restrict ourselves to working with paths of finite length, which we define through the notion of the $1$-variation.

\begin{definition*}
    Let $x:[s,t] \longrightarrow \mathbb{R}^d$ be a continuous path.
    The 1-variation of $x$ is defined by
    \begin{equation*}
        \|x\|_{\mathrm{1}-\mathrm{var}}=\sup _{\left(t_0, \ldots, t_k\right) \in P} \sum_{i=1}^k\left\|x_{t_i}-x_{t_{i-1}}\right\|,
    \end{equation*}
    where $P=\{(t_0, \ldots, t_k) \mid k \geq 0,$ $s=t_0<\cdots<t_k=t\}$
    denotes the set of all finite partitions of $[s, t]$.
\end{definition*}

Intuitively, $\|x\|_{\mathrm{1}-\mathrm{var}}$ can be interpreted as the length of the path $x$, therefore we work with paths such that $\|x\|_{\mathrm{1}-\mathrm{var}} < \infty$. These paths are also referred to as being of \textit{bounded variation}. This assumption will later prove important for theoretical guarantees, and it also allows us to define path integration in the $\textit{Riemann-Stieltjes}$ sense. We refer to Chapters 2 and 3 in \cite{friz2010multidimensional} for more details.

\begin{definition*}
    Let  $x, y$ be two continuous paths from $[s,t]$ to $\mathbb{R}$, with $x$ being of finite length. Let $P^n=\left\{s=t_0^n<t_1^n<\cdots<t_{N_n}^n=t\right\}$ for $n \geq 0$ be a sequence of partitions with vanishing mesh size $|P^n|=\max_{1 \leq i \leq N_n}|t_i - t_{i-1}|$, meaning $\left|P^n\right| \rightarrow 0$ as $n \rightarrow \infty$. The Riemann-Stieltjes integral of $y$ against $x$, is defined as
    \begin{equation*}
        \int_s^t y_u \mathrm{~d} x_u:=\lim _{n \rightarrow \infty} \sum_{i=0}^{N_n-1} y_{t_i^n}\left(x_{t_{i+1}^n}-x_{t_i^n}\right),
    \end{equation*}
    where the limit does not depend on the choice of the sequence of partitions $\left(P^n\right)_{n \geq 0}$.
\end{definition*}

This definition can be generalized to the case when $x$ and $y$ are vector-valued; for $x, y$ : $[s,t] \rightarrow \mathbb{R}^d$ we have
    \begin{equation*}
        \int_s^t y_u \mathrm{~d} x_u=\left(\begin{array}{c}
        \int_s^t y_u^{(1)} \mathrm{~d} x_u^{(1)} \\
        \vdots \\
        \int_s^t y_u^{(d)} \mathrm{~d} x_u^{(d)}
        \end{array}\right).
    \end{equation*}

\begin{rem}
    Take two continuous paths $x:[s,t] \longrightarrow \mathbb{R}$ and $y:[s,t] \longrightarrow \mathbb{R}$, where $x$ is continuously differentiable. Then the $\textit{Riemann-Stieltjes}$ integral of $y$ against $x$ comes down to 
    \begin{equation*}
        \int_s^t y_u \mathrm{~d} x_u=\int_s^t y_u {x}^{'}_u \mathrm{~d} u,
    \end{equation*}
    where the last integral is the classical Riemann integral and $ {x}^{'}_u = \mathrm{~d}x_u/\mathrm{~d}u$ denotes differentiation with respect to a single variable.
\end{rem}

\begin{example}
    Consider the constant path $y_u = 1$, $u \in [s,t]$. It follows that the integral of $y$ against any \newline $x:[s,t] \longrightarrow \mathbb{R}$ that is continuously differentiable is just the increment of $x$ on $[s,t]$:
    $$ 
    \int_s^t \mathrm{~d} x_u = \int_s^t {x}^{'}_u \mathrm{~d} u = x_t - x_s.
    $$
\end{example}

\subsection{Tensor spaces}

We denote by $\otimes$ the tensor product and by $(\mathbb{R}^d)^{\otimes k}$ the $k$-th tensor power of $\mathbb{R}^d$ , with $(\mathbb{R}^d)^{\otimes 0} := \mathbb{R}$. Let $e_1, \ldots, e_d$ be the canonical basis of $\mathbb{R}^d$, then $\left(e_{i_1} \otimes \cdots \otimes e_{i_k}\right)_{1 \leq i_1, \ldots, i_k \leq d}$ is an orthonormal basis of $\left(\mathbb{R}^d\right)^{\otimes k}$, meaning that any element $a \in\left(\mathbb{R}^d\right)^{\otimes k}$ can be written as $ a=\sum_{1 \leq i_1, \ldots, i_k \leq d} a^{\left(i_1, \ldots, i_k\right)} e_{i_1} \otimes \cdots \otimes e_{i_k}$, where $a^{\left(i_1, \ldots, i_k\right)} \in \mathbb{R}$.

 Furthermore, $\left(\mathbb{R}^d\right)^{\otimes k}$ is a Hilbert space of dimension $d^k$, with the following scalar product and norm:
$$
\langle a, b\rangle_{\left(\mathbb{R}^d\right)^{\otimes k}}=\sum_{1 \leq i_1, \ldots, i_k \leq d} a^{\left(i_1, \ldots, i_k\right)} b^{\left(i_1, \ldots, i_k\right)}, \quad \|a\|_{\left(\mathbb{R}^d\right)^{\otimes k}} =  \sqrt{\sum_{1 \leq i_1, \ldots, i_k \leq d} {\left( a^{\left(i_1, \ldots, i_k\right)} \right)} ^2} \quad .
$$

\begin{definition}
    We define the extended tensor algebra $T\left(\left(\mathbb{R}^d\right)\right)$ as
    $$
    T\left(\left(\mathbb{R}^d\right)\right):=\left\{\left(a_0, a_1, \ldots, a_k,\ldots \right) \mid a_k \in\left(\mathbb{R}^d\right)^{\otimes k}, k \geq 0\right\}.
    $$
\end{definition}

In other words, the extended tensor algebra is a set of infinite sequences of tensors of increasing order: $a_0$ is a scalar, $a_1$ a vector, $a_2$ a matrix, $a_3$ a ``cube" (a third-order tensor), and so on. 
It can be shown that $T\left(\left(\mathbb{R}^d\right)\right)$ is a non-commutative algebra under the tensor product $\otimes$, with the neutral element $\mathbf{1}=(1,0,0,\ldots)$. Furthermore, it holds that the subset $T_1\left(\left(\mathbb{R}^n\right)\right):=\left\{\mathbf{a} \mid \mathbf{a} \in T\left(\left(\mathbb{R}^n\right)\right) \text { with } a_0=1\right\}$ is a Lie group. Finally, the following subspace will be of special interest. 

\begin{definition*}
    We denote with $\mathscr{T}\left(\mathbb{R}^d\right)$ the space of square-summable elements of $T\left(\left(\mathbb{R}^d\right)\right)$:
    
    \begin{equation*}
        \mathscr{T}\left(\mathbb{R}^d\right)=\left\{\mathbf{a} \in T\left(\left(\mathbb{R}^d\right)\right) \mid  \sum_{k=0}^{\infty}\left\|a_k\right\|_{\left(\mathbb{R}^d\right)^{\otimes k}}^2<\infty\right\}.
    \end{equation*}
        
\end{definition*}

We can  endow this space with the following scalar product and associated norm: for any $\mathbf{a}, \mathbf{b} \in \mathscr{T}\left(\mathbb{R}^d\right)$,
    \begin{equation*}
        \langle \mathbf{a}, \mathbf{b}\rangle_{\mathscr{T}\left(\mathbb{R}^d\right)}=\sum_{k=0}^{\infty} \langle a_k, b_k\rangle_{\left(\mathbb{R}^d\right)^{\otimes k}}, \quad 
        \|\mathbf{a}\|_{\mathscr{T}\left(\mathbb{R}^d\right)} =  \sqrt{\sum_{k=0}^{\infty}\left\|a_k\right\|_{\left(\mathbb{R}^d\right)^{\otimes k}}^2} .
    \label{normhilbert}
    \end{equation*}
    
\begin{proposition*}[Proposition \ref{prop:sig_Hilbert}]
     $\left(\mathscr{T}\left(\mathbb{R}^d\right),\langle\cdot, \cdot\rangle_{\mathscr{T}\left(\mathbb{R}^d\right)}\right)$ is a separable Hilbert space. 
\end{proposition*}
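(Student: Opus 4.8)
The plan is to exhibit $\mathscr{T}(\mathbb{R}^d)$ as a concrete realization of the sequence space $\ell^2$ over a countable index set, and then transport the standard Hilbert-space and separability properties of $\ell^2$ across the resulting isometry. This sidesteps having to reprove completeness from scratch in the tensor setting.

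First I would check that $\langle\cdot,\cdot\rangle_{\mathscr{T}(\mathbb{R}^d)}$ is genuinely well-defined on $\mathscr{T}(\mathbb{R}^d)$. For $\mathbf{a},\mathbf{b}\in\mathscr{T}(\mathbb{R}^d)$, the levelwise Cauchy--Schwarz inequality gives $|\langle a_k,b_k\rangle_{(\mathbb{R}^d)^{\otimes k}}|\le \|a_k\|\,\|b_k\|$, and a second Cauchy--Schwarz for real sequences yields $\sum_{k}\|a_k\|\,\|b_k\|\le \big(\sum_k\|a_k\|^2\big)^{1/2}\big(\sum_k\|b_k\|^2\big)^{1/2}<\infty$. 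Hence the defining series converges absolutely, and the inner-product axioms (bilinearity, symmetry, positive-definiteness) are inherited levelwise from those of each $(\mathbb{R}^d)^{\otimes k}$. This makes $(\mathscr{T}(\mathbb{R}^d),\langle\cdot,\cdot\rangle)$ a genuine inner-product space.

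Next, I would set up the coordinate isometry. Writing each tensor $a_k\in(\mathbb{R}^d)^{\otimes k}$ in the orthonormal basis $(e_{i_1}\otimes\cdots\otimes e_{i_k})$, every $\mathbf{a}\in\mathscr{T}(\mathbb{R}^d)$ is encoded by its coefficients $a^{(i_1,\dots,i_k)}$ ranging over the countable index set $I=\bigsqcup_{k\ge0}\{1,\dots,d\}^k$ of all finite multi-indices. By definition of the norm, $\|\mathbf{a}\|_{\mathscr{T}(\mathbb{R}^d)}^2=\sum_{I}\big(a^{(i_1,\dots,i_k)}\big)^2$, which is exactly the $\ell^2(I)$-norm of the coefficient family. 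The map $\Phi\colon\mathscr{T}(\mathbb{R}^d)\to\ell^2(I)$ sending $\mathbf{a}$ to its coefficients is thus a linear isometry; it is surjective because any square-summable family over $I$ can be regrouped by levels, each level contributing the finitely many ($d^k$) coordinates of a tensor $a_k$, with $\sum_k\|a_k\|^2<\infty$, so the reassembled sequence lies in $\mathscr{T}(\mathbb{R}^d)$. Hence $\Phi$ is an isometric isomorphism onto $\ell^2(I)$.

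Finally, since $I$ is countable, $\ell^2(I)$ is a separable Hilbert space: it is complete, and the finitely supported rational-coefficient families form a countable dense subset (equivalently, the unit vectors $\{e_i\}_{i\in I}$ give a countable orthonormal basis). Both completeness and separability are preserved under the isometric isomorphism $\Phi$, so $\mathscr{T}(\mathbb{R}^d)$ inherits them. I expect the only delicate point to be completeness, namely verifying that a Cauchy sequence's levelwise limits reassemble into an element that is both square-summable and the limit in the full $\mathscr{T}$-norm rather than merely levelwise; this is precisely the standard $\ell^2$ tail estimate, bounding $\sum_{k\le K}\|a_k^{(n)}-a_k^{(m)}\|^2$ by the full norm, then letting $m\to\infty$ and $K\to\infty$, which is subsumed once the identification with $\ell^2(I)$ is in place.
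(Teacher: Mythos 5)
Your proof is correct, and it takes a somewhat different route from the paper's. You identify $\mathscr{T}(\mathbb{R}^d)$ isometrically with $\ell^2(I)$ over the countable index set $I=\bigsqcup_{k\ge0}\{1,\dots,d\}^k$ of multi-indices, and then transport completeness and separability across that isomorphism in one stroke; the surjectivity and well-definedness checks you give are exactly the ones needed. The paper instead splits the two properties: it \emph{cites} an external reference (Fermanian et al., Appendix~A) for completeness, and proves separability by directly exhibiting a countable orthonormal basis $\mathbf{B}=\bigcup_k\mathbf{B}_k$ obtained by embedding the bases $e_{i_1}\otimes\cdots\otimes e_{i_k}$ of each level into the sequence space. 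The two separability arguments are essentially the same fact in different clothing --- your standard unit vectors of $\ell^2(I)$ are precisely the preimages under your isometry of the paper's basis $\mathbf{B}$ --- but your approach buys self-containedness (completeness is reduced to the known completeness of $\ell^2$ rather than outsourced) and a one-line justification of both properties, while the paper's version keeps the graded tensor structure explicit, which is the form in which the basis is actually used elsewhere (e.g.\ when truncating signatures level by level). Your closing remark correctly locates the only delicate point (reassembling levelwise limits of a Cauchy sequence into a square-summable limit in the full norm), and correctly observes that it is subsumed by the $\ell^2$ identification.
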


\begin{proof}
   (\textit{Completeness.}) The detailed proof that $\left(\mathscr{T}\left(\mathbb{R}^d\right), \langle\cdot, \cdot\rangle_{\mathscr{T}\left(\mathbb{R}^d\right)}\right)$ is a Hilbert space can be found in \cite{fermanian2021framingrnnkernelmethod}, Appendix A. 
   
    (\textit{Separability.}) Let $e_1, \ldots, e_d$ be the canonical basis of $\mathbb{R}^d$, and $B_k = \left\{ e_{i_1} \otimes \cdots \otimes e_{i_k} \mid 1 \leq i_1, \ldots, i_k \leq d \right\}$ denote the associated orthonormal basis of $\left(\mathbb{R}^d\right)^{\otimes k}$ that has $d^k$ elements. We now embed all of the elements of $B_k, k \geq 0$ in the space $\mathscr{T}\left(\mathbb{R}^d\right)$ by setting the other tensor sequence elements to zero, i.e. we define the sets
    \begin{equation*}
        \mathbf{B}_k = \left\{ \left(0, 0, \ldots, \overbrace{e_{i_1} \otimes \cdots \otimes e_{i_k} }^{k\text{-th position}}, 0, \ldots \right) \mid 1 \leq i_1, \ldots, i_k \leq d  \right\},
    \end{equation*}
     with $\mathbf{B}_0 = \left\{ \left( 1, 0, 0, \ldots \right)\right\}$. Let $\mathbf{B} = \bigcup_{k=0}^{\infty}\mathbf{B}_k$. The first thing to note is that $\mathbf{B}$ is a countable union of finite sets and is therefore countable. Secondly, $\mathbf{B}$ is a Schauder basis for $\mathscr{T}\left(\mathbb{R}^d\right)$, meaning that every element $\mathbf{a} \in \mathscr{T}\left(\mathbb{R}^d\right)$ can be uniquely represented as $ \mathbf{a}=\sum_{n=1}^{\infty} \alpha_n \mathbf{v}_n,$ $\mathbf{v}_n \in \mathbf{B}$, where the convergence of the infinite sum is the one of the topology induced by the norm \eqref{normhilbert}. Indeed, we have that $\mathbf{a}= \sum_{k=0}^{\infty} (0,0 \ldots, a_k, 0, \ldots)$ and $(0,0 \ldots, a_k, 0, \ldots) = \sum_{1 \leq i_1, \ldots, i_k \leq d} a^{\left(i_1, \ldots, i_k\right)} \left(0, 0, \ldots, e_{i_1} \otimes \cdots \otimes e_{i_k}, 0 \ldots \right) $, which gives us a unique representation of $\mathbf{a}$ in terms of elements of $\mathbf{B}$. Furthermore, from the fact that $B_k$ is an orthonormal basis for ${\left(\mathbb{R}^d\right)}^{\otimes k}$ and the definition of the scalar product \eqref{normhilbert}, it is easy to see that $\mathbf{B}$ is also orthonormal, making it a countable orthonormal basis. Finally, a Hilbert space is separable if and only if it has a countable orthonormal basis, which leads to the conclusion that $\left(\mathscr{T}\left(\mathbb{R}^d\right),\langle\cdot, \cdot\rangle_{\mathscr{T}\left(\mathbb{R}^d\right)}\right)$ is a separable Hilbert space.
\end{proof}

\subsection{Signatures}

We can now fully define signatures, in the light of the theory presented earlier.

\subsubsection{Definitions}

\begin{definition*}
    The signature of a finite-length path $x$ on $[s, t]$ is defined as an infinite tensor sequence:
    \begin{equation*}
        S\left( x_{[s,t]} \right)=\left(1, \ S^1\left( x_{[s,t]} \right), \ldots, \ S^k\left( x_{[s,t]} \right), \ \ldots \right) \in T\left(\left(\mathbb{R}^d\right)\right),
    \end{equation*}
    where the $k$-th element (called level) is given by
    \begin{equation*}
        {S^k \left( x_{[s,t]} \right)}=\idotsint \displaylimits_{s < u_1<\cdots<u_k < t}  \mathrm{d} x_{u_1} \otimes \cdots \otimes \mathrm{d} x_{u_k} \in {\left(\mathbb{R}^d\right)}^{\otimes k}.
    \end{equation*}
\end{definition*}

Using the orthonormal basis for ${\left(\mathbb{R}^d\right)}^{\otimes k}$, we can write
    \begin{equation*}
       {S^k \left( x_{[s,t]} \right)} = \sum_{\left(i_1, \ldots, i_k\right) \subset\{1, \ldots, d\}^k} S^{\left(i_1, \ldots, i_k\right)}\left(x_{[s,t]}\right) e_{i_1} \otimes \cdots \otimes e_{i_k},
    \end{equation*}
    where $S^{\left(i_1, \ldots, i_k\right)}\left(x_{[s,t]}\right)$ is referred to as the \textit{signature coefficient} of $x$ along the multi-index $\left(i_1, \ldots, i_k\right) \subset\{1, \ldots, d\}^k, k \geq 1$ on $[s,t]$ and is given by
    \begin{equation*}
        S^{\left(i_1, \ldots, i_k\right)} = \idotsint \displaylimits_{s < u_1<\cdots<u_k < t} \mathrm{d}x^{\left(i_1\right)}_{u_1} \cdots \mathrm{d}x^{\left(i_k\right)}_{u_k}.
    \end{equation*}
    
In practice, we don't work with infinite sequences of tensors. Firstly, we only consider a finite number of signature levels, defining the \textit{truncated signature} as
    \begin{equation*}
        S^{\leq N}\left( x_{[s,t]} \right) = \left(1, \ S^1\left( x_{[s,t]} \right), \ \ldots, \ S^N\left( x_{[s,t]} \right) \right),
    \end{equation*}
where we refer to $N$ as the truncation order. Secondly, in applications, we disregard the underlying tensor structure, and consider the truncated signature as a a collection of all signature coefficients with multi-index of length $k \leq N$, arranged in a vector 
\begin{equation*}
    \left(1, S^{(1)}\left( x_{[s,t]} \right), \ldots, S^{(d)}\left( x_{[s,t]} \right), S^{(1,1)}\left( x_{[s,t]} \right), S^{(1,2)}\left( x_{[s,t]} \right), \ldots, S^{\overbrace{ \left( d,d, \ldots, d \right)}^{N \text{times}}} \left( x_{[s,t]} \right) \right),
\end{equation*}
that is of size $s_d(N)= \sum_{k=0}^{N} d^k = \left(d^{N+1}-1\right) /(d-1)$.

We now illustrate the definitions on the example of a linear path, deriving a closed formula for signature coefficients in this simple case. This will prove an important building block in our methodology.

\begin{example}[Example \ref{ex-lin-sig} revisited]
    Let $x:[s,t] \mapsto \mathbb{R}^d$, $u \to \frac{x_{t} - x_s}{t-s}(u-s) + x_{s}$ be a d-dimensional linear path. For any $\left(i_1, \ldots, i_k\right) \in\{1, \ldots, d\}^k$ we have the following:
    \begin{align*}
    S^{\left(i_1, \ldots, i_k\right)}\left(x_{[s,t]}\right)  &= \idotsint \displaylimits_{s < u_1<\cdots<u_k < t} \mathrm{~d} x_{u_1}^{(i_1)} \cdots \mathrm{~d} x_{u_k}^{(i_k)} 
   = \idotsint \displaylimits_{s < u_1<\cdots<u_k < t} \frac{x_t^{(i_1)}-x_s^{(i_1)}}{t-s} \mathrm{~d} u_1 \ldots \frac{x_t^{(i_k)}-x_s^{(i_k)}}{t-s}  \mathrm{~d} u_k  \\
    &=\prod_{j=1}^k \left(\frac{x_t^{(i_j)}-x_s^{(i_j)}}{t-s}\right) \cdot \idotsint \displaylimits_{s < u_1<\cdots<u_k < t} \mathrm{~d} u_1  \ldots \mathrm{~d} u_k \\
    &= \prod_{j=1}^k \left(\frac{x_t^{(i_j)}-x_s^{(i_j)}}{t-s}\right) \cdot \int_s^t \int_s^{u_k} \ldots \left( \int_s^{u_2} \mathrm{~d} u_1  \right) \ldots \mathrm{~d} u_{k-1} \mathrm{~d} u_k  \\
    &= \prod_{j=1}^k \left(\frac{x_t^{(i_j)}-x_s^{(i_j)}}{t-s}\right) \cdot \int_s^t \int_s^{u_k} \ldots \left( \int_s^{u_3} \left(u_2 - s\right) \mathrm{~d} u_2  \right) \ldots \mathrm{~d} u_{k-1} \mathrm{~d} u_k  \\
    &= \prod_{j=1}^k \left(\frac{x_t^{(i_j)}-x_s^{(i_j)}}{t-s}\right) \cdot \int_s^t \int_s^{u_k} \ldots \left( \int_s^{u_4} \frac{{\left(u_3 - s\right)}^2}{2} \mathrm{~d} u_3  \right) \ldots \mathrm{~d} u_{k-1} \mathrm{~d} u_k \\
    &= \ldots =  \prod_{j=1}^k \left(\frac{x_t^{(i_j)}-x_s^{(i_j)}}{t-s}\right) \cdot \frac{{(t-s)}^k}{k!} 
    =\frac{1}{k!} \prod_{j=1}^k\left(x_t^{(i_j)}-x_s^{(i_j)}\right) .
    \label{siglin}
    \end{align*}
    
    More compactly written, $S^k\left( x_{[s,t]} \right)=\frac{1}{k!}\left(x_t-x_s\right)^{\otimes k}$.
    \label{ex:piecewise_2}
\end{example}

\subsubsection{Properties}

\begin{proposition}[Invariances]\label{prop:invariance} Let $x:[s,t] \mapsto \mathbb{R}^d$ be a path of finite length. The following holds:
\begin{enumerate}
    \item Let $\widetilde{x}_u=x_{\psi(u)}$ be the smooth reparametrization of $x$, where $\psi:[s,t] \rightarrow[s,t]$ is a continuously differentiable non-decreasing surjection. Then, for any $[v,w] \subset [s,t]$ we have $S\left(\widetilde{x}_{[v,w]} \right)=S\left(x_{\left[\psi(v), \psi(w)\right]}\right) .$
    \item  Let $\overline{x}_u = x_u + a$ be a path obtained by translating $x$ by some $a \in \mathbb{R}^d$. Then 
    $
    S\left(\overline{x}_{[s,t]}\right) = S\left( x_{[s,t]} \right).
    $
\end{enumerate}
\label{prop:invariances}
\end{proposition}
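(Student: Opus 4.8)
The plan is to prove both invariances coordinate-wise. Since an element of $\mathscr{T}(\mathbb{R}^d)$ is determined by all its entries, and each level $S^k$ is determined by its coefficients $S^{(i_1,\ldots,i_k)}$ along the orthonormal basis, it suffices to establish each identity for an arbitrary fixed multi-index $(i_1,\ldots,i_k)\in\{1,\ldots,d\}^k$ and every $k\ge 1$; the level-$0$ entry equals $1$ in all cases. Throughout I would work with the Riemann--Stieltjes description of the iterated integrals, using the Remark that for continuously differentiable paths $\int y_u\,\mathrm{d}x_u=\int y_u\,x'_u\,\mathrm{d}u$, and carry out the computation in the $C^1$ setting where the chain rule applies verbatim.

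For translation invariance (item~2) the argument is immediate. Writing $\overline{x}^{(i)}_u=x^{(i)}_u+a^{(i)}$ with $a^{(i)}$ constant, the increments coincide, $\mathrm{d}\overline{x}^{(i)}_u=\mathrm{d}x^{(i)}_u$. Since every signature coefficient is an iterated integral taken purely against these increments, each coefficient of $\overline{x}$ equals the corresponding coefficient of $x$, hence $S(\overline{x}_{[s,t]})=S(x_{[s,t]})$.

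For reparametrization invariance (item~1) I would start from the definition
\[
S^{(i_1,\ldots,i_k)}\!\left(\widetilde{x}_{[v,w]}\right)=\idotsint\displaylimits_{v<r_1<\cdots<r_k<w}\mathrm{d}\widetilde{x}^{(i_1)}_{r_1}\cdots\mathrm{d}\widetilde{x}^{(i_k)}_{r_k},
\]
and perform the change of variables $u_j=\psi(r_j)$ in each integration variable. Using $\widetilde{x}^{(i)}_r=x^{(i)}_{\psi(r)}$ together with the chain rule gives $\mathrm{d}\widetilde{x}^{(i)}_r=\big(x^{(i)}\big)'_{\psi(r)}\,\psi'(r)\,\mathrm{d}r$, so that $\psi'(r_j)\,\mathrm{d}r_j=\mathrm{d}u_j$ and the diagonal Jacobian of the map $(r_j)\mapsto(u_j)$ is $\prod_j\psi'(r_j)$. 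Because $\psi$ is a non-decreasing surjection, the open simplex $\{v<r_1<\cdots<r_k<w\}$ is carried onto $\{\psi(v)\le u_1\le\cdots\le u_k\le\psi(w)\}$, and the transformed integral becomes
\[
\idotsint\displaylimits_{\psi(v)<u_1<\cdots<u_k<\psi(w)}\mathrm{d}x^{(i_1)}_{u_1}\cdots\mathrm{d}x^{(i_k)}_{u_k}=S^{(i_1,\ldots,i_k)}\!\left(x_{[\psi(v),\psi(w)]}\right),
\]
which is the claimed identity.

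The main obstacle is handling the non-strict monotonicity of $\psi$ cleanly: since $\psi$ need only be non-decreasing, it may be constant on subintervals, so the map is not injective there and the image of the open simplex is its closure. I would resolve this by observing that on any interval where $\psi$ is constant one has $\psi'\equiv 0$, which annihilates those contributions in the $C^1$ setting, and that the diagonal boundary $\{u_i=u_{i+1}\}$ is Lebesgue-null, so the strict and non-strict order constraints give the same value. For paths of merely bounded variation the same conclusion follows either from the change-of-variables formula for Riemann--Stieltjes integrals applied directly, or by approximating $x$ by smooth paths and invoking continuity of the signature; I would flag this extension but present the computation in the differentiable case.
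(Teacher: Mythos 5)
Your proposal is correct and follows essentially the same route as the paper, which merely notes that translation invariance follows from $\mathrm{d}\overline{x}_u = \mathrm{d}x_u$ and that reparametrization invariance is a consequence of the change-of-variables formula for Riemann--Stieltjes integrals; you carry out exactly these two arguments coordinate-wise, with additional (correct) care for the non-strict monotonicity of $\psi$ and the extension from the $C^1$ case to bounded-variation paths.
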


Both of these properties can be derived straight from the definition of signatures as iterated integrals. Invariance to reparametrization is a consequence of the change of variable formula in Riemann-Stieltjes integration, while invariance to translation follows from the fact that $d\overline{x}_t = dx_t $.

\begin{proposition*}[Algebraic properties]
Let $x:[s, t] \mapsto \mathbb{R}^d$ and $y:[t, u] \mapsto \mathbb{R}^d$ denote two paths of finite length.
\begin{enumerate}
    
    \item (\textbf{Chen's identity}) Let $x * y:[s, u] \mapsto \mathbb{R}^d$ be the concatenation of $x$ and $y$, meaning $(x * y)_v=x_v$ for $v \in[s, t]$ and $(x * y)_v=x_t+y_v-y_t$ for $v \in[t, u]$. Then
    $
    S(\left(x * y\right)_{[s,u]})=S\left( x_{[s,t]} \right) \otimes S(y_{[t,u]})
    $
    
    \item (\textbf{Time reversal}) We denote the time-reversal of $x$ as the path $\overleftarrow{x}:[s, t] \mapsto \mathbb{R}^d$ where $\overleftarrow{x}(u)=x_{s+t-u}$. Then
    $
    S\left( x_{[s,t]} \right) \otimes S\left( \overleftarrow{x}_{[s,t]} \right) = (1,0,0,\ldots).
    $
\end{enumerate}
\end{proposition*}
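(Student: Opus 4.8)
The plan is to prove both identities by passing to signature coefficients, that is, by working component-wise in the canonical basis of each $(\mathbb{R}^d)^{\otimes k}$, so that every tensor equation reduces to a family of scalar identities between iterated Riemann-Stieltjes integrals. Chen's identity is the workhorse and I would establish it first.

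For Chen's identity, I would fix a level $k$ and a multi-index $(i_1,\dots,i_k)$ and expand $S^{(i_1,\dots,i_k)}\big((x*y)_{[s,u]}\big)$ as an integral over the ordered simplex $\{s<u_1<\cdots<u_k<u\}$. The key step is to partition this simplex according to the unique index $0\le m\le k$ for which $u_1,\dots,u_m\in[s,t]$ and $u_{m+1},\dots,u_k\in[t,u]$; the set where some $u_j=t$ has measure zero and is discarded. On $[s,t]$ one has $\mathrm d(x*y)=\mathrm dx$, and on $[t,u]$ the definition $(x*y)_v=x_t+y_v-y_t$ gives $\mathrm d(x*y)=\mathrm dy$, since translation by the constant $x_t-y_t$ leaves the differential unchanged. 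Each simplex piece then factors into an iterated integral of $x$ over the first $m$ variables times an iterated integral of $y$ over the remaining $k-m$, yielding $\sum_{m=0}^{k} S^{(i_1,\dots,i_m)}(x)\,S^{(i_{m+1},\dots,i_k)}(y)$. Recognizing this as exactly the level-$k$ component of the tensor-algebra product, $(\mathbf a\otimes\mathbf b)_k=\sum_{m=0}^{k} a_m\otimes b_{k-m}$, completes the argument. The only care required is the measure-zero boundary and the constant-shift invariance of the differential; the combinatorics is otherwise routine.

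For time reversal, I would first use Chen's identity together with the reparametrization invariance of Proposition \ref{prop:invariances} to reinterpret $S(x_{[s,t]})\otimes S(\overleftarrow{x}_{[s,t]})$ as the signature of the concatenation that runs $x$ forward and then retraces it backward. I would then prove the identity in stages. For a single linear segment $\ell$ from $a$ to $b$, the Example gives $S(\ell)=\sum_{k\ge0}\tfrac1{k!}(b-a)^{\otimes k}$, a tensor exponential; its reversal runs from $b$ to $a$, so $S(\overleftarrow\ell)$ is the tensor exponential of $-(b-a)$. Because all tensor powers of a single fixed vector commute, the scalar identity $\exp(v)\exp(-v)=1$ transfers verbatim to the tensor algebra, giving $S(\ell)\otimes S(\overleftarrow\ell)=(1,0,0,\dots)$. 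For a piecewise linear path $x=\ell_1*\cdots*\ell_n$ one has $\overleftarrow x=\overleftarrow{\ell_n}*\cdots*\overleftarrow{\ell_1}$, so by Chen's identity $S(x)\otimes S(\overleftarrow x)$ telescopes: the innermost factor $S(\ell_n)\otimes S(\overleftarrow{\ell_n})$ collapses to $\mathbf 1$, and the cancellation propagates outward segment by segment.

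The remaining and most delicate step is passing from piecewise linear paths to an arbitrary finite-length path. Here I would approximate $x$ by piecewise linear paths in the $1$-variation metric and invoke continuity of the signature map with respect to $1$-variation, of which the factorial bound $\|S^k(x_{[s,t]})\|\le\|x\|_{1\text{-}\mathrm{var}}^k/k!$ from the preceding proposition is the quantitative seed, to take the limit in $S(x)\otimes S(\overleftarrow x)=\mathbf 1$. Controlling this limit uniformly across all levels at once, rather than level by level, is where the real work lies; since this continuity is a standard fact in rough-path theory, I would cite it from \cite{friz2010multidimensional} rather than reprove it, noting that for the present application all paths are piecewise linear and the telescoping argument already suffices.
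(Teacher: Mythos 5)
The paper does not actually prove this proposition: it is treated as a classical result (going back to Chen and developed in \cite{Lyons_2007,friz2010multidimensional}), and the appendix only restates Chen's identity at the level of coefficients, namely $S^{(i_1,\dots,i_k)}((x*y)_{[s,u]})=\sum_{\ell=0}^{k}S^{(i_1,\dots,i_\ell)}(x_{[s,t]})\,S^{(i_{\ell+1},\dots,i_k)}(y_{[t,u]})$, and remarks that time reversal expresses the group inverse in $T_1((\mathbb{R}^d))$. Your argument therefore supplies a proof the paper omits, and it is the standard one: for Chen's identity, the decomposition of the ordered simplex $\{s<u_1<\cdots<u_k<u\}$ according to the splitting index $m$, the observation that $\mathrm{d}(x*y)=\mathrm{d}x$ on $[s,t]$ and $\mathrm{d}(x*y)=\mathrm{d}y$ on $[t,u]$ (translation leaves the Stieltjes differential unchanged), and the Fubini factorization over the product of two simplices recover exactly the coefficient identity the paper states, which is indeed the level-$k$ component of the tensor product. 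Your time-reversal argument is also sound: the signature of a linear segment is the tensor exponential $\sum_k v^{\otimes k}/k!$ of its increment (matching Example~\ref{ex-lin-sig}), powers of a single vector commute so $\exp(v)\otimes\exp(-v)=\mathbf{1}$ follows from the binomial identity $\sum_m\binom{k}{m}(-1)^{k-m}=0$, and Chen's identity telescopes the piecewise linear case. The one place where you defer work --- extending from piecewise linear to arbitrary finite-length paths via continuity of the signature in the bounded-variation setting --- is a genuine gap if left uncited, but it is a standard fact in \cite{friz2010multidimensional}, and, as you note, for this paper's pipeline all paths are piecewise linear, so your elementary telescoping argument already covers every case actually used. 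In short: the proof is correct and more complete than what the paper offers; just make the citation for the approximation step explicit if you intend the statement in full generality.
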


We can rephrase Chen's relation in terms of signature coefficients: for any multi-index $\left(i_1, \ldots, i_k\right) \subset\{1, \ldots, d\}^k$, it holds that
\begin{equation}
    S^{(i_1, \ldots, i_k)}\left((x * y)_{[s,u]}\right)
    = \sum_{\ell=0}^k S^{(i_1, \ldots, i_\ell)}\left(x_{[s,t]}\right) \cdot S^{(i_{\ell+1}, \ldots, i_k)}\left(y_{[s,t]}\right)  \label{chencomp}.
\end{equation}

This result is crucial for the operationalization of signature computation as it provides a recursive formula for the signature of a concatenation of paths. As noted earlier, it also enables an easy calculation in the case of piecewise linear paths, which we explain in greater detail in the following remark.

\begin{rem}

\label{rem:piecewiselin_calc}

    Suppose that $x$ is a piecewise linear path such that for $s = u_1 < u_2 \ldots < u_k = t$, $x|_{[u_{\ell}, u_{\ell+1}]}$, $1 \leq \ell \leq k$, is linear. In order to obtain its signature, we first need to calculate the signature coefficients for each linear segment, as in Example \ref{ex:piecewise_2}:
    \begin{equation*}
        {S^{(i_1, \ldots, i_k)}\left(x_{[u_{\ell},u_{\ell+1}]} \right)} = \frac{1}{k!} \prod_{j=1}^k\left(x_{u_{\ell+1}}^{(i_j)}-x_{u_{\ell}}^{(i_j)}\right).
        \label{siglin1}
    \end{equation*}

    We proceed to calculate the signature on the whole time horizon $[s,t]$ by inductively concatenating the linear segments using the Chen's relation \eqref{chencomp}:
    \begin{align}
        {S^{(i_1, \ldots, i_k)}\left(x_{[s,u_{\ell+1}]} \right)} 
        &= \sum_{m=0}^k {S^{(i_1, \ldots, i_m)}\left(x_{[s,u_\ell]} \right)} \cdot {S^{(i_{m+1}, \ldots, i_k)}\left(x_{[u_{\ell},u_{\ell+1}]} \right)} \nonumber \\
        &= \sum_{m=0}^k \left( {S^{(i_1, \ldots, i_m)}\left(x_{[s,u_{\ell}]} \right)}  \cdot \frac{1}{(k - m) !} \prod_{j=m + 1}^{k} \left(x_{u_{\ell+1}}^{(i_j)} - x_{u_\ell}^{(i_j)}\right) \right) \label{siglin2} .  \end{align}

\end{rem}

Computing the signature of a piecewise linear path therefore reduces to applying the iterative formula stated above, requiring no numerical integration. This procedure is efficiently implemented in the Python package \texttt{iisignature} (\citealp{iisignature}).

Going back to the time reversal property, we remark that, by definition, the first element of the signature is set to 1 and therefore $S\left( x_{[s,t]} \right) \in T_1\left(\left(\mathbb{R}^d\right)\right)$. This result now states that the inverse of the signature of a path $x$ in the tensor group $T_1\left(\left(\mathbb{R}^d\right)\right)$ is exactly the signature of $x$ traversed backwards in time.

\section{PROOFS OF THEORETICAL CONTRIBUTIONS}

\subsection{Proof of Theorem \ref{thm:uat-signatures}}
\label{appendix:theory_uat}

\begin{theorem*}[Universal approximation on sliding windows]
    Let ${(X_t)}_t$ be a $d$-dimensional time series such that Assumption \ref{assumption:bounded} holds. Let $w \in \mathbb{N}$ be a fixed window size and $f : C\left([0,w],\mathbb{R}^{d+1}\right) \mapsto \mathbb{R}$ a continuous function given the uniform topology. 
    It holds that for every $\varepsilon>0$, there exists $N \in \mathbb{N}, \theta \in \mathbb{R}^{s_{d+1}(N)}$, such that
    \begin{equation*}
        \sup_{t}\left\lvert f\left({X}_{[t-w, t]} - (0,X_{t-w})\right)- \theta^{\top} S^{\leq N} \left({X}_{[t-w, t]}\right) \right\rvert \leq \varepsilon.
    \end{equation*}
\end{theorem*}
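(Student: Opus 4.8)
The plan is to reduce the statement to the deterministic universal approximation result, Theorem \ref{thm:uat}, by exhibiting a single compact set of paths that contains every translated sliding-window path simultaneously. Write $\widetilde{X}^{(t)} := X_{[t-w,t]} - (0, X_{t-w})$ for the time-augmented window at time $t$ after subtracting its starting value. Each realized $\widetilde{X}^{(t)}(\omega)$ is a piecewise linear path in $C([0,w], \mathbb{R}^{d+1})$ whose time coordinate $x^{(0)}_u$ equals the identity $u \mapsto u$ (hence strictly monotone) and whose remaining coordinates all start at $0$; thus every such path begins at the common basepoint $\mathbf{0} \in \mathbb{R}^{d+1}$ and possesses a monotone coordinate, matching two of the three structural hypotheses of Theorem \ref{thm:uat}.

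Next I would extract uniform size bounds from Assumption \ref{assumption:bounded}. Since $|X^{(i)}_s| \leq M$ a.s., the translated coordinates satisfy $|X^{(i)}_{t-w+j} - X^{(i)}_{t-w}| \leq 2M$, so $\|\widetilde{X}^{(t)}\|_\infty$ is bounded by a constant $R$ depending only on $M, w, d$; and because each linear segment has time-slope $1$ and $X$-slopes of magnitude at most $2M$, every $\widetilde{X}^{(t)}$ is $L$-Lipschitz with $L := \sqrt{1 + 4 d M^2}$, uniformly in $t$ and $\omega$. The raw collection of realized windows need not be closed, so I would pass to the superset
\[
\mathcal{K}^* := \Big\{ x \in C([0,w], \mathbb{R}^{d+1}) : x_0 = \mathbf{0},\ x^{(0)}_u = u,\ \|x\|_\infty \leq R,\ \mathrm{Lip}(x) \leq L \Big\},
\]
which contains all the $\widetilde{X}^{(t)}(\omega)$. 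This set is bounded and equicontinuous (shared Lipschitz constant), and each defining constraint survives uniform limits, so $\mathcal{K}^*$ is closed; by Arzelà--Ascoli it is compact in the uniform topology. Its elements start at a common point, have a monotone coordinate, and are of finite length, so $\mathcal{K}^*$ satisfies every hypothesis of Theorem \ref{thm:uat} with $d$ replaced by $d+1$.

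Applying Theorem \ref{thm:uat} to the restriction $f|_{\mathcal{K}^*}$, which is continuous because $f$ is continuous on $C([0,w], \mathbb{R}^{d+1})$ in the uniform topology, yields for any $\varepsilon > 0$ an order $N$ and a vector $\theta \in \mathbb{R}^{s_{d+1}(N)}$ with $|f(x) - \theta^\top S^{\leq N}(x)| \leq \varepsilon$ for every $x \in \mathcal{K}^*$. Specializing to $x = \widetilde{X}^{(t)}(\omega)$ and invoking translation invariance of signatures (Proposition \ref{prop:invariances}), which gives $S^{\leq N}(\widetilde{X}^{(t)}) = S^{\leq N}(X_{[t-w,t]})$, converts the bound into the claimed one; since the estimate holds uniformly over $\mathcal{K}^*$, it holds for every $t$, and taking the supremum finishes the argument.

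The main obstacle is the compactness step: the family of realized windows is not a priori closed, and the whole proof hinges on replacing it with the structurally-defined superset $\mathcal{K}^*$ and checking that uniform boundedness together with a common Lipschitz constant deliver compactness through Arzelà--Ascoli. The other delicate point, conceptually simple but essential, is that the approximation must be carried out for the \emph{translated} paths, so that the common-basepoint hypothesis of Theorem \ref{thm:uat} is met, whereas the final signature is that of the untranslated path; translation invariance is precisely what bridges the two.
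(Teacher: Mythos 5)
Your proposal is correct and follows essentially the same route as the paper's proof: translate each window to a common basepoint, embed the realized paths in a structurally-defined superset (uniformly bounded, common Lipschitz constant, monotone time coordinate, common start) that is compact by Arzelà--Ascoli, apply Theorem \ref{thm:uat} there, and transfer back via translation invariance of the signature. The only cosmetic difference is that your superset pins the time coordinate to exactly $u \mapsto u$ while the paper allows any linear coordinate with slope bounded below, which changes nothing.
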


\begin{proof}
 We first introduce $\overline{X}_{[t-w, t]} := {X}_{[t-w, t]} - (0,X_{t-w})$ as a random path obtained by linear interpolation of $\left(0, 0\right), \left(1, X_{t-w+1} - X_{t-w}\right), \ldots, \left( w, X_{t} - X_{t-1} \right) $. In other words, we consider paths on windows translated by the starting point. We now define $U = \left\{ \overline{X}_{[t-w, t]}(\omega) \mid \omega \in \Omega; \hspace{2mm} t \in \mathbb{Z}\right\}$ as the set of all realisations of random paths on all sliding windows.
 We note that $U$ is a set of piecewise linear functions whose first coordinate (corresponding to time) is strictly monotone and such that $\left| \right(\overline{X}^{(i)}_{[t-w, t]}(\omega) \left)\right| \leq \max(2M,w) =:B$ for all $1 \leq i \leq d$. It follows that they are therefore Lipschitz-continuous with the same Lipschitz constant $L := \max(4M, 1)$. 

\vspace{1.5mm}

 Let us define a set of paths $K \subset C\left([0,w],\mathbb{R}^{d+1}\right)$ such that: 
\begin{enumerate}
    \item $ \forall x \in K$ has at least one coordinate that is linear, i.e. $\exists j \in \{1,2,\ldots, d\}$, $x^j_u = a_xu$ and such that there exists $\varepsilon > 0$ such that $a_x \geq \varepsilon$ $\forall x \in K$ \label{linear-monotone},
    \item $\forall x\in K$ has the same starting point at 0, i.e. $x_0 = 0\in \mathbb{R}^{d+1}$ ,
    \item the paths are uniformly bounded by $B$, i.e. $\left|x^{(i)}_u \right| \leq B$ $\forall i \in \{1,2,\ldots,d\}$, $\forall x \in K$ ,
    \item the paths are Lipschitz with the same constant $L$, i.e.  $\| x_v -x_u\|_{\infty} \leq L |v-u| $, $\forall x \in K.$
\end{enumerate}

Functions in $K$ have the same Lipschitz constant, therefore they are also uniformly equicontinuous. 
Furthermore, it can be shown that the set $K$ is also closed in $C\left([0,w], \mathbb{R}^{d+1}\right)$. We can now use the Arzelà–Ascoli theorem to conclude that $K$ is compact in $C\left([0,w], \mathbb{R}^{d+1}\right)$ in the uniform topology. This allows us to apply the Theorem \ref{thm:uat} to $K$.

Leveraging Theorem \ref{thm:uat} and the fact that $U \subset K$, we obtain:
\begin{equation*}
 \sup_{U}\Bigg|f\left(\overline{X}_{[t-w, t]}(\omega)\right)- \theta^{\top} S^{\leq N} \left(\overline{X}_{[t-w, t]}\right) \Bigg| 
\quad \leq 
\sup_{K} \bigg| f\left({x}\right) - \theta^{\top} S^{\leq N} \left({x}_{[0,w]}\right) \bigg| \leq \varepsilon \quad 
.   
\end{equation*}
We conclude to the desired result using the identity $S^{\leq N} \left(\overline{X}_{[t-w, t]}\right)=S^{\leq N} \left({X}_{[t-w, t]}\right)$ following from the invariance by translation of the signatures; see Proposition \ref{prop:invariance}, 2.
\end{proof}

\subsection{Proof of Theorem \ref{thm:stationarity_sig}}
\label{appendix:theory_stat}

We present a more formal statement of the result with the corresponding proof.

\begin{theorem*}
Let ${(X_t)}_{t \ge 1}$ be a discrete $d$-dimensional time series such that the Assumption \ref{assumption:stationary} is satisfied. The following statements hold:
\begin{enumerate}[label=\roman*)]
        \item For any multi-index $\left(i_1, \ldots, i_k\right)$ of any length $k \in \mathbb{N}$, the time series ${\left({S^{(i_1, \ldots, i_k)}(X_{[t-w,t]})}\right)}_t$ is strictly stationary. 
        
        \item For any truncation order $N \in \mathbb{N}$, the time series ${\left({S^{\leq N}\left(X_{[t-w,t]}\right)}\right)}_t$ is strictly stationary.

        \item The time series ${\left({S\left(X_{[t-w,t]}\right)}\right)}_t$ is strictly stationary.
\end{enumerate}
\end{theorem*}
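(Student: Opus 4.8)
\textit{Proof plan.} The plan is to reduce the infinite-dimensional claim (iii) to the finite-dimensional claims (i)--(ii) through the Hilbert-space structure, all of which rest on a single structural observation: the signature on a sliding window is a fixed, $t$-independent measurable function of the block of increments inside that window. I would make this precise first. Writing $\Delta_s := X_{s+1}-X_s$, the interpolated path on $[t-w,t]$ (including any deterministic time-augmentation coordinate) is built from the segment increments $(\Delta_{t-w},\ldots,\Delta_{t-1})$ together with fixed constants. By translation invariance the starting value drops out, and by Chen's identity combined with the closed form for linear segments (Examples \ref{ex-lin-sig} and \ref{ex-piecewise}), every coefficient $S^{(i_1,\ldots,i_k)}(X_{[t-w,t]})$ equals $g_{(i_1,\ldots,i_k)}(\Delta_{t-w},\ldots,\Delta_{t-1})$ for one fixed polynomial, hence measurable, map $g_{(i_1,\ldots,i_k)}$ that does not depend on $t$; the same holds for the vector-valued map producing the whole truncated signature $S^{\leq N}$.

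Next I would establish (i) and (ii). Fix times $t_1,\ldots,t_n$ and a shift $h$. Assumption \ref{assumption:stationary} gives strict stationarity of the increment process, so the joint law of the stacked blocks $(\Delta_{t_j-w},\ldots,\Delta_{t_j-1})_{j=1}^n$ is invariant under $t_j\mapsto t_j+h$. Applying a fixed measurable map to random vectors with the same law yields outputs with the same law; applying $g$ (resp.\ the truncated-signature map) coordinatewise therefore shows that the joint laws of $(S^{(i_1,\ldots,i_k)}(X_{[t_j-w,t_j]}))_j$ and of $(S^{\leq N}(X_{[t_j-w,t_j]}))_j$ are shift-invariant, which is exactly strict stationarity in (i) and (ii). I would stress that the stationarity of the increments is used jointly across the $t_j$, not merely marginally.

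For the target statement (iii), I would pass to the limit $N\to\infty$. Each window path has finite length, so the norm bound $\|S(x)\|_{\mathscr{T}(\mathbb{R}^d)}\le\exp(\|x\|_{1\text{-}var})<\infty$ makes $S(X_{[t-w,t]})$ an almost surely well-defined random element of the separable Hilbert space $\mathscr{T}(\mathbb{R}^d)$ (Proposition \ref{prop:sig_Hilbert}). By Remark \ref{rem-stat_def}, the law of such a process is determined by its finite-dimensional marginals, and, by separability, each marginal is in turn determined by the joint laws of finitely many coordinate projections at finitely many times. Since every coordinate projection is a signature coefficient of some level $k$, any finite collection of them at times $t_1,\ldots,t_n$ is a subvector of $(S^{\leq N}(X_{[t_j-w,t_j]}))_j$ for $N$ large enough. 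Shift-invariance of these joint laws is precisely what (ii) supplies for every $N$, so every finite-dimensional marginal of $(S(X_{[t-w,t]}))_t$ is shift-invariant and the process is strictly stationary.

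The structural observation and the measurable-function argument are routine. The step I expect to be the main obstacle is the infinite-dimensional reduction in (iii): one must justify carefully that the $\mathscr{T}(\mathbb{R}^d)$-valued law is pinned down by finite collections of coordinate projections (the Borel $\sigma$-algebra of a separable Hilbert space is generated by the coordinate functionals) and that these collections are genuinely exhausted by truncated signatures, so that nothing is lost in passing from the finite levels of (ii) to the full signature.
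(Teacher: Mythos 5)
Your proposal is correct and follows essentially the same route as the paper: express each signature coefficient (and hence the truncated signature) as a fixed, $t$-independent measurable function of the window's increment block via the piecewise-linear/Chen formula, push strict stationarity of the increments through that map, and lift to the full signature using the fact that finite-dimensional marginals determine the law in the separable Hilbert space $\mathscr{T}(\mathbb{R}^d)$. Your explicit handling of the joint laws across several times $t_1,\ldots,t_n$ and of the exhaustion of coordinate projections by truncated signatures merely makes precise steps the paper treats more tersely.
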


\begin{proof}
\begin{enumerate}[label=\roman*)]
    \item  For any multi-index $I=(i_1, \ldots, i_k)$ of length $k \in \mathbb{N}$, we denote with $f^{I}$ the function from $\mathbb{R}^{w \times d}$ to $\mathbb{R}$ given as:
    \begin{equation*}
        (X_{t-w+1} - X_{t-w},\ldots, X_{t} - X_{t-1})  
        \overset{f^I}{\longrightarrow} {S^{I}(X_{[t-w,t]})} \,.
    \end{equation*}
  
    Examining the iterative formula \eqref{siglin2}, we see that the $f^I$ (the signature component ${{S^{I}(X_{[t-w,t]})}}$) is the same deterministic, measurable function (as a composition of measurable functions: multiplication, sums and powers) of the increments $(X_{t-w+1} - X_{t-w}, X_{t-w+2} - X_{t-w+1}, \ldots, X_{t} - X_{t-1})$ of the original time series ${(X_t)}_{t}$. Furthermore, it does not depend on $t$. We also note that if the $d$-dimensional time series $\left( X_t - X_{t-1}\right)_t$ is strictly stationary, then it follows straight from the definition of strict stationarity that the $(d \times w)$-dimensional time series ${\left(X_{t-w+1} - X_{t-w}, \ldots, X_{t} - X_{t-1}\right)}_t$ is also strictly stationary. If we now remember that for any deterministic, measurable function $g$ and any random vectors $\mathbf{X}, \mathbf{Y}$ such that $\mathbf{X} \overset{D}{=} \mathbf{Y}$, we have $g(\mathbf{X}) \overset{D}{=} g(\mathbf{Y}) $, and combine it with the previous remark for $f^I$, we have the strict stationarity for ${\left({S^{I}(X_{[t-w,t]})}\right)}_t$ .
    
    \item Let us now fix an $N$ and consider the $s_d(N)$-dimensional time series of the signature truncated at level $N$, ${\left({S^{\leq N}\left(X_{[t-w,t]}\right)}\right)}_t$. Let $I_1, I_2, \ldots, I_{s_d(N)}$ be the multi-indexes of the components of the truncated signature, in alphabetical order. We now have:
    \begin{align*}
        f^{\leq N}  \left( X_{t-w+1} - X_{t-w},  \ldots, X_{t} - X_{t-1} \right)  
        & := 
        \left( f^{I_1}, \ldots, f^{I_{s_d(N)}} \right) 
        ( X_{t-w+1} - X_{t-w}, \ldots, X_{t} - X_{t-1} ) \\
        &= \left( {S^{I_1}\left(X_{[t-w,t]}\right)}, \ldots, S^{I_{s_d(N)}}\left(X_{[t-w,t]}\right) \right) \\
        &= {S^{\leq N}\left(X_{[t-w,t]}\right)} ,
    \end{align*}
where $f = \left( f^{I_1}, \ldots, f^{I_{s_d(N)}} \right) : \mathbb{R}^{d \times w} \longrightarrow \mathbb{R}^{s_d(N)}$ is a deterministic, measurable function of the increments that does not depend on $t$, as noted above. Following the same arguments as before, we can now conclude that ${\left({S^{\leq N}\left(X_{[t-w,t]}\right)}\right)}_t$ is strictly stationary. 

\item It is easy to see that the previous proof holds not only for the time series of signatures truncated at level $N$, but also for the time series of any collection of signature components denoted by multi-indexes $\left( I_1, \ldots, I_j \right)$. Additionally, as discussed in Remark \ref{rem-stat_def}, the finite-dimensional distributions of the signature components uniquely determine the law of the entire infinite-dimensional time series of signatures. From these two arguments, it follows that ${\left({S \left(X_{[t-w,t]}\right)}\right)}_t$ is strictly stationary.

\end{enumerate}

\end{proof}

\section{IMPLEMENTATION DETAILS}

\label{appendix:implementation}

Our method consists in efficiently computing signatures on sliding windows and fitting a ridge regression model, using the covariate signatures as predictors and the increments of the target series as the response variable. Hyperparameters are tuned on a validation set. The full procedure is detailed in the pseudocode below.

\begin{algorithm}[h!]
    \SetKwInOut{Input}{Input}
    \SetKwInOut{Output}{Output}
    \Input{data $\left\{ (x_0,y_0), \ldots, (x_n,y_n) \right\}$, truncation order $N$, delay $D$, window size $w$} 
    Split the data into the train and validation and test set by fixing the horizons $w \leq  t_{train} < t_{valid} \leq n$ \;
    \For {$\lambda$ in \textit{set of possible regularization parameters}}{
        Compute the signatures $S\left( x_{[t-w,t]}\right)$ for $w \leq t\leq t_{train}$ using Algorithm \ref{algo:update_sig} \;
        Fit a ridge regression model with parameter $\lambda$ on pairs $\left\{ \left( S\left( x_{[t-w,t]}\right) , \Delta_D y_t \right) \mid w \leq t\leq t_{train} \right\}$ \;
        Compute forecasts $\widehat{Y_t}=y_{t-D}+{\widehat{\theta}}^{\top}S\left( x_{[t-w,t]}\right)$ and error metric on the validation set \;
    }
    Identify the best regularisation parameter $\widehat{\lambda}$  (smallest error metric on the validation set) \;
    Refit the ridge regression with parameter $\widehat{\lambda}$ on the combined train and validation set \;
	\Output{Ridge coefficients estimator $\widehat{\theta}$}
	\caption{Fitting a ridge regression on signatures}
	\label{algo:ridge_sig}
\end{algorithm}

Finally, to obtain the forecast $\widehat{Y}_t$ for $t$ outside the train-validation set, it is enough to update the signature as in Algorithm \ref{algo:update_sig} and use the estimated ridge regression coefficients $\widehat{\theta}$ to compute the forecast $\widehat{Y}_t = y_{t-D} + \widehat{\theta}^\top S(x_{[t-{w},t]})$. 

We note that this sequential procedure can easily be transferred to an online learning setting.

\section{EXPERIMENTAL DETAILS}
\label{appendix:experiments}

\textbf{Technical details.} The data and code used in this paper are publicly available in our GitHub repository: \url{https://github.com/ninadrobac/slidesig}. All experiments were conducted on a standard laptop (MacBook Air, Apple M2 chip, 16 GB RAM), demonstrating that our implementation can be executed efficiently on standard hardware. 


\textbf{Real data.} We use half-hourly electricity demand data from RTE (France’s electricity transmission system operator), aggregated on the national level, and temperature observations from Météo France recorded on a three-hour grid. The temperature series on the national level is obtained by averaging across weather stations and linearly interpolating to match the demand frequency. This yields a unified dataset with 48 observations per day from January 1, 2012, to December 30, 2015, resulting in 70128 data points of temperature and demand. This period is selected to avoid the demand fluctuations associated with the COVID-19 pandemic and the 2022 energy crisis. We denote by $Y_t$ the electricity demand and by $T_t$ the temperature at any time $t \geq 1$.

\textbf{Synthetic data.} Electricity consumption depends on many factors, including calendar variables such as weekdays and public holidays. Since our goal is to isolate the role of temperature, we construct a synthetic dataset in which both the strength of past dependency and the functional form of the relationship can be controlled. To model memory effects, we introduce exponentially smoothed temperature as
\begin{equation*}
    \left\{
    \begin{array}{ll}
        \overline{T}_1^\alpha = T_1 \\
      \overline{T}_t^\alpha = (1-\alpha)\overline{T}_{t-1}^\alpha + \alpha T_t^\alpha \,, \,\textrm{ for any } t \geq 2  \,, 
    \end{array}
\right.
\end{equation*}
where the smoothing parameter $\alpha \in (0,1]$ controls the influence of past values - the smaller $\alpha$, the stronger the influence of past values. We then fit a linear regression model with observed consumption $Y_t$ as the target variable and $\overline{T}_t$ and $\overline{T}_t^2$ as covariates:
\begin{equation}
    \widehat{Y}^\alpha_t = \widehat{\theta_1} \overline{T}_t^\alpha +  \widehat{\theta_2} \big(\overline{T}_t^\alpha\big)^2,
    \label{eq:lin_model_synthetic}
\end{equation}

thereby imposing a quadratic dependence of simulated demand on smoothed temperature. The final synthetic demand time series $\big(\widetilde{Y}^\alpha_t\big)_t$ is obtained by adding normally distributed noise to the fitted values $\widehat{Y}^\alpha_t$ :
\begin{equation}
    \widetilde{Y}^\alpha_t = \widehat{\theta_1} \overline{T}_t^\alpha +  \widehat{\theta_2} \big(\overline{T}_t^\alpha\big)^2 + \varepsilon_t, \quad \varepsilon_t \overset{\mathrm{i.i.d}}{\sim} \mathcal{N}(0,\sigma^2)\,.
    \label{eq:model_synthetic}
\end{equation}

In all experiments presented, we fix $\sigma$ to introduce a moderate level of noise ($\sigma=1000$ for the first and $\sigma=500$ for the second and third synthetic data experiment) and we set a fixed random seed when simulating the noise to ensure reproducibility. The synthetic demand series closely follows the main patterns of the observed series, effectively capturing the yearly variations driven by temperature, providing a realistic proxy for evaluation. However, as illustrated in Figure~\ref{fig:consom_sim_real_comparison}, Model~\eqref{eq:model_synthetic} does not reproduce finer weekly patterns that depend on calendar variables such as the day of the week or time of day, as it only contains temperature information.
This is reflected in the fact that for experiments on synthetic data we fix $D=2$, and for experiments on real data we choose $D=7$ to account for weekly patterns. 
Although we could have adapted Model~\eqref{eq:model_synthetic} by adding daily seasonality to better match the data locally, this is simply a proof of concept, so we opted for the simplest possible framework.

We note that by varying the smoothing parameter $\alpha$, we can generate different versions of the dataset in which past values exert more or less influence on the simulated demand. Over several tested parameters, we mainly focus on $\alpha = 0.005$ as it offers the best fit of Model~\eqref{eq:model_synthetic} to the real data.

\begin{figure}[h!]
    \centering
    \begin{minipage}[b]{0.48\linewidth}
        \centering
        \includegraphics[width=\linewidth]{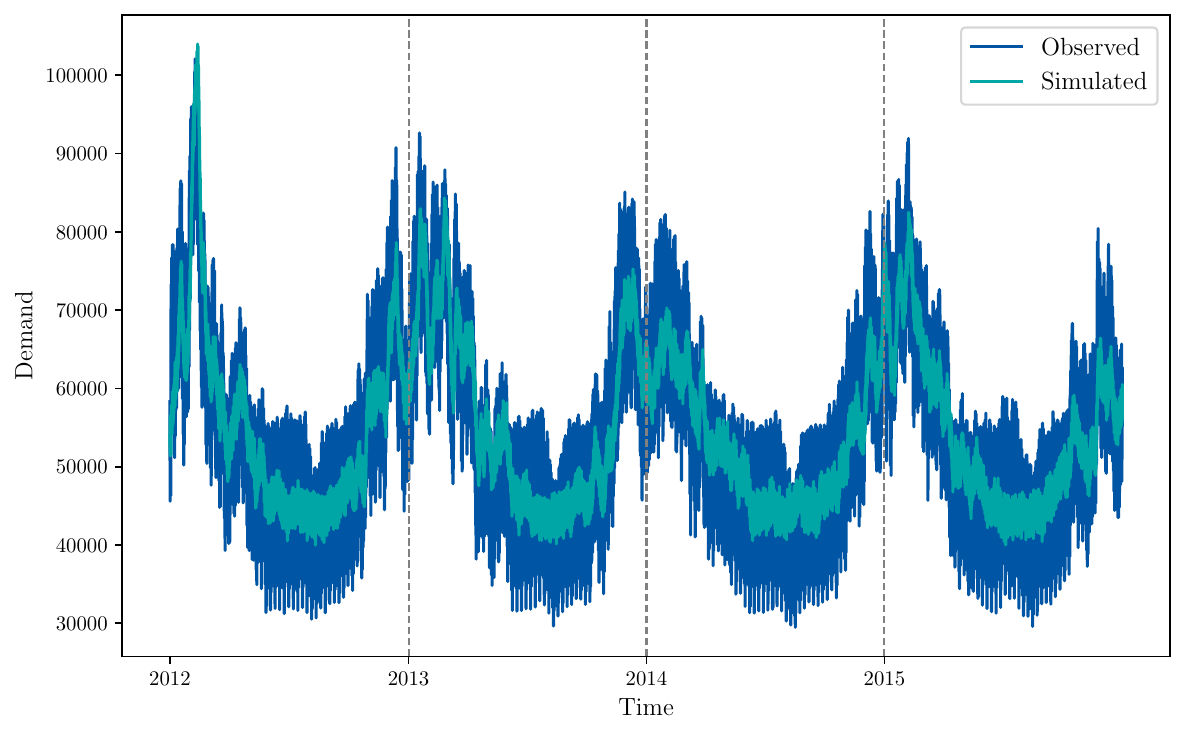}
    \end{minipage}
    \hfill
    \begin{minipage}[b]{0.48\linewidth}
        \centering
        \includegraphics[width=\linewidth]{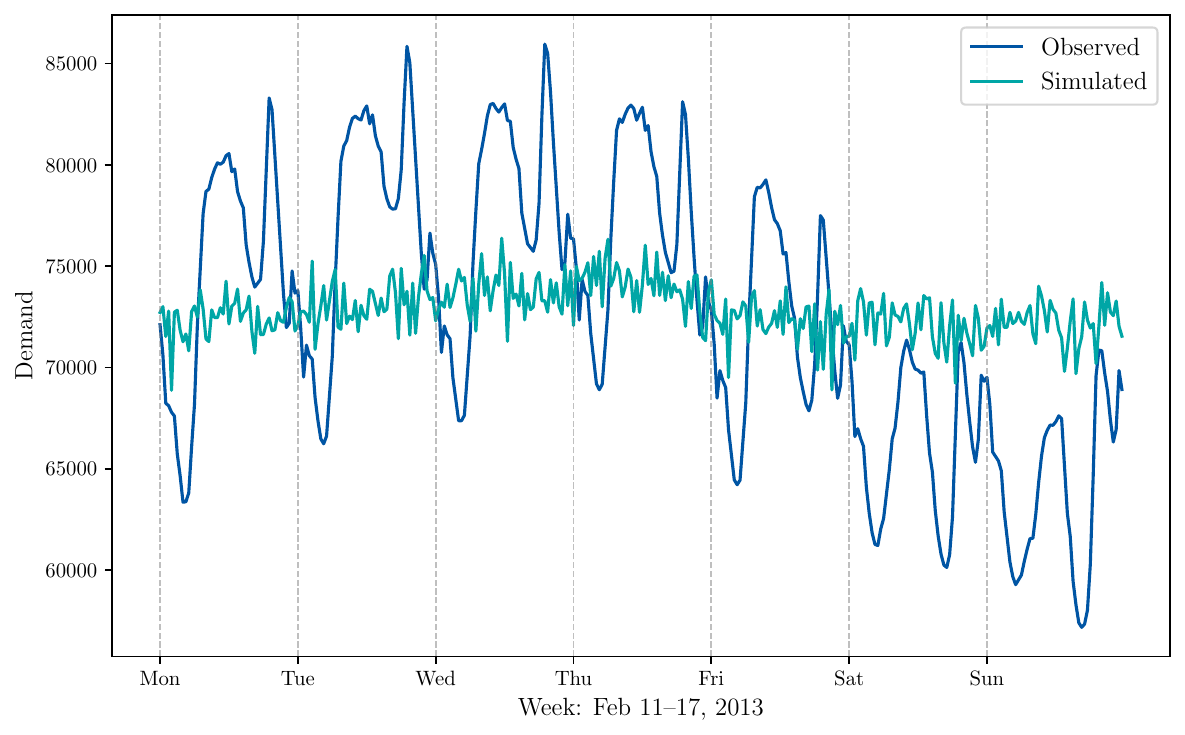}
    \end{minipage}
    \caption{Comparison of observed and simulated electrical demand. Left: full four-year dataset showing that the synthetic series captures yearly patterns driven by temperature. Right: zoom-in on a weekly segment (Feb 11--17, 2013) illustrating that weekly fluctuations driven by calendar variables are not retained in synthetic demand.}
    \label{fig:consom_sim_real_comparison}
\end{figure}

\textbf{Assumptions in practice.}
As discussed earlier, Assumption~\ref{assumption:bounded} is satisfied since temperature values in the dataset remain within natural bounds (between $-10\,^{\circ}\mathrm{C}$ and $35\,^{\circ}\mathrm{C}$). In contrast, verifying Assumption~\ref{assumption:stationary} is less straightforward, as it concerns the distributional properties of temperature increments. To assess this, we apply two standard statistical tests — the \textit{Augmented Dickey–Fuller} (ADF) and \textit{Kwiatkowski–Phillips–Schmidt–Shin} (KPSS) tests, both of which indicate that the time series of half-hourly temperature increments is weakly stationary at a significance level of $0.05$. The same analysis was performed for weekly demand increments and two-day increments of simulated demand, yielding satisfactory results.

\end{document}